\documentclass[12pt,letter,doublespace]{article}
\usepackage{amssymb}
\usepackage{amsmath} 
\usepackage{graphics}
\usepackage{epsfig}
\usepackage{graphicx}
\usepackage{cite}
\usepackage{amsthm}
\usepackage{color}
\usepackage{booktabs}

\newtheorem{thm}{Theorem}[section]
\newtheorem{lemma}{Lemma}[section]
\newtheorem{corol}{Corollary}[section]

\newcommand{\vx}{{\bf x}}
\newcommand{\vX}{{\bf X}}
\newcommand{\vDelta}{\mbox{\boldmath $\Delta$}}

\newcommand{\vbeta}{\mbox{\boldmath $\beta$}}

\newcommand{\vxi}{\mbox{\boldmath $\xi$}}
\newcommand{\vB}{{\bf B}}
\newcommand{\vZ}{{\bf Z}}
\newcommand{\vz}{{\bf z}}
\newcommand{\ve}{{\bf e}}
\newcommand{\vU}{{\bf U}}

\newcommand{\vY}{{\bf Y}}
\newcommand{\vP}{{\bf P}}

\newcommand{\vC}{{\bf C}}

\newcommand{\Var}{\mbox{Var}}
\newcommand{\mX}{{\mathbb X}}

\newcommand{\mW}{{\mathbb W}}

\newcommand{\pkonv}{\stackrel{p}{\rightarrow}}
\newcommand{\dkonv}{\stackrel{d}{\rightarrow}}
\newcommand{\askonv}{\stackrel{a.s.}{\longrightarrow}}

\newcommand{\bqa}{\begin{eqnarray*}}
\newcommand{\eqa}{\end{eqnarray*}}
\newcommand{\bqan}{\begin{eqnarray}}
\newcommand{\eqan}{\end{eqnarray}}
\newcommand{\bit}{\begin{itemize}}
\newcommand{\eit}{\end{itemize}}
\newcommand{\ben}{\begin{enumerate}}
\newcommand{\een}{\end{enumerate}}
\newcommand{\beq}{\begin{equation}}
\newcommand{\eeq}{\end{equation}}
\newcommand{\bdes}{\begin{description}}
\newcommand{\edes}{\end{description}}

\setlength{\textwidth}{6.5in}
\setlength{\textheight}{9.0in}
\setlength{\topmargin}{-0.5in}
\setlength{\oddsidemargin}{0.0in}
\setlength{\evensidemargin}{0.0in}
\baselineskip = 7mm
\parskip = 2.5mm

\renewcommand{\baselinestretch}{1.7}

\begin{document}

\title{Significance Testing and Group Variable Selection}
\author{\sc Adriano Zambom and Michael Akritas\\
The Pennsylvania State University}
\date{April 27, 2012}
\maketitle{}
    \vspace{-2cm}
    
\begin{footnotetext}[1]
{Adriano Zanin Zambom: adriano.zambom@gmail.com, Michael Akritas: mga@stat.psu.edu}
\end{footnotetext}
    
\section*{Abstract}
\indent \indent Let $\vX,\ \vZ$ be $r$ and $s$-dimensional covariates, respectively, used to 
model the response variable $Y$ as $Y=m(\vX,\vZ)+\sigma(\vX,\vZ)\epsilon$. We develop 
an ANOVA-type test for the null hypothesis that $\vZ$ has no influence on the regression 
function, based on residuals obtained from local polynomial fitting of the null model. Using p-values from this test, a group variable
selection method based on multiple testing ideas is proposed.
Simulations studies suggest that the proposed test procedure outperforms the generalized likelihood ratio test when the alternative is non-additive or there is heteroscedasticity.   
Additional simulation studies, with data generated from linear, non-linear and logistic 
regression, reveal that the proposed group variable selection procedure
performs competitively against Group Lasso, and outperforms it in selecting groups having 
nonlinear effects. The proposed group variable selection procedure is illustrated on a real data set.

\vspace{.5cm}

\textbf{Keywords:} Nonparametric regression; local polynomial regression; Lack-of-fit tests;
Dimension reduction; Backward elimination.

\noindent{\bf Acknowledgments}: This research was partially supported by CAPES/Fulbright grant 15087657 and NSF grant DMS-0805598.

\newpage

\pagestyle{plain}
\setcounter{page}{1}
\setlength{\textheight}{9.0in}
\setlength{\topmargin}{-0.5in}

\section{Introduction}  \label{sec:intro}
\indent \indent Advances in data collection technologies and data storage devices have
enabled the collection of data sets involving a large number of observations on many variables in several disciplines. When the objective of data collection is that of building a predictive model for a response variable, the challenges presented by massive data sets
have opened new frontiers for statistical research. While the inclusion of a large number
of predictors reduces modeling bias, the practice of including insignificant variables
is likely to result in complicated models with less predictive power and reduced ability to 
discern and interpret the influence of the predictors. 
The underlying principles of modern 
model building are parsimony and sparseness. Parsimony requires simple models based
on few predictors. Sparseness is a relatively new concept which evolved from the 
realization that in most scientific contexts prediction can be based on only a few 
variables. Variable selection uses the assumption of 
sparseness, enabling parsimonious model building. Thus, variable (also called feature) selection plays a central role in 
current scientific research as a fundamental component of model building.

Due to readily available software, variable selection is often performed by modeling the 
expected response at covariate value
$\vx$ as $m(\vx)=\vx\vbeta$. Classical approaches to variable selection, such as 
stepwise selection or elimination 
procedures, and best subset variable selection, can be computationally intensive or 
ignore stochastic errors. A new class of methodologies addresses variable selection through 
minimization of a constrained or penalized objective function, such as Tibshirani's 
(1996) LASSO, Fan and Li's (2001) SCAD, Efron, Hastie, Johnstone and Tibshirani's 
(2004) least angle regression, Zou's (2006) adaptive LASSO, and 
Candes and Tao's (2007) Dantzig selector. A different approach exploits the conceptual 
connection between model testing and variable selection: dropping variable $j$ from the model is equivalent to not rejecting the null hypothesis $H_0^j: \beta_j=0$. Abramovich, Benjamini, Donoho and Johnstone (2006) bridged the methodological divide by showing that the application of the false discovery rate (FDR) controlling procedure of Benjamini and Hochberg (1995) on $p$-values resulting from testing each $H_0^j$ can be translated into minimizing a model selection criterion 
similar to that used in Tibshirani and Knight (1999), Birge and Massart (2001) and Foster and Stine (2004). These criteria are more flexible than that of Donoho and Johnstone (1994), which uses a penalty parameter depending only on the dimensionality of the covariate, as well as AIC and Mallow's C$_p$, which use a constant penalty parameter. Working with orthogonal designs, Abramovich et al. (2006) showed that their method is asymptotically minimax for $\ell^r$ loss, $0<r\le 2$, simultaneously throughout a range of sparsity classes, provided the level $q$ for the FDR is set to $q<0.5$. Generalizations of this methodology to non-orthogonal designs differ mainly in the generation of the $p$-values for testing $H_0^j: \beta_j=0$, and the FDR method employed. Bunea, Wegkamp and Auguste  (2006) use $p$-values generated from the standardized regression coefficients resulting from fitting the full model and employ Benjamini and Yekuteli's (2001) method for controlling FDR under dependency, while  Benjamini and Gavrilov (2009) use $p$-values from a forward selection procedure where the $i$th stage $p$-to-enter is the $i$th stage constant in the multiple-stage FDR procedure in Benjamini, Krieger and Yekutieli (2006).

Model checking and variable selection procedures based on the assumption of a linear model
may fail to discern the relevance of covariates whose effect on $m(\vx)$ is nonlinear.   Because of this, procedures for both model checking and variable selection have been developed under more general/flexible models. See, for example Li and Liang (2008), Wang and Xia (2008), Huang, Horowitz and Wei (2010), 
Storlie, Bondell, Reich and Zhang (2011), and references therein.
However, the methodological approaches in this literature have been 
distinct from those of model checking.  Working under a fully nonparametric
regression model, Zambom and Akritas (2012) developed a  competitive
variable selection procedure by exploiting the aforementioned conceptual connection 
between model checking and variable selection. Their approach consists of  backward elimination using the 
Benjamini and Yekuteli (2001) method 
applied on the $p$-values resulting from testing the significance of each covariate. The test 
procedure they developed is based on the residuals obtained by fitting all covariates except the one whose significance is being tested. These residuals serve as the response variable in a 
one-way high-dimensional ANOVA design whose factor levels are the values of the covariate being tested. By augmenting these factor levels, and using smoothness assumptions, they developed an asymptotic theory for an ANOVA-type test statistic. 

In many applications, covariates come in groups. For example, 
microarray experiments generate very large datasets with expression levels for thousands of genes but, typically, small sample size. Studies show that genes can act together as groups, and the scientific task is that of selecting the groups that are strongly associated with an outcome variable of interest. This type of problem can be addressed by 
first forming groups of genes through a clustering method and then selecting the important groups
through a group selection procedure. One of the most common group selection procedures is the Group Lasso (Yuan and Lin, 2006), and the Adaptive Group Lasso (Wang and Xia, 2008). 
See also Park, Hastie and Tibshirani (2007) who, using averages of the genes within each group, perform a selection based on a procedure combining hierarchical clustering and Lasso.

The first part of is paper develops an extension of the ANOVA test procedure of 
Zambom and Akritas (2012) to testing the significance of a group of variables under a fully nonparametric model which also allows for heteroscedasticity. The second part of the paper
introduces a backward elimination procedure for group variable selection using the 
Benjamini and Yekuteli (2001) method 
applied on the $p$-values resulting from testing the significance of each group.

This paper is organized as follows. Section \ref{sec:main} describes the proposed
methodology for testing the significance of a group of variables, derives the asymptotic null distribution of the test statistic, and
presents results of simulation studies comparing its performance to that of the generalized
likelihood ratio test of Fan and Jiang (2005). Section \ref{section:Variable_Selection} describes the test-based group variable selection procedure, and presents results of simulation studies comparing its performance to that of Group Lasso.  The analysis of a real
data set  involving gene expression levels of healthy and cancerous colon tissues 
is presented in Section \ref{sec:Real_Data}.

\section{Nonparametric Model Checking}  \label{sec:main}
\subsection{The Hypothesis and the Test Statistic}\label{test.stat}
\indent \indent Assume we have $n$ observations, $(Y_i,\vU_i)$, $i=1,\ldots,n$, of the response variable $Y$ and covariates $\vU = (\vX,\vZ)$, where $\vX$ and $\vZ$ have dimensions $r$ and $s$ respectively ($r + s = d$). Let  $m(\vx,\vz) = E(Y|\vX=\vx,\vZ=\vz)$ denote the regression function. The heterocedastic nonparametric regression model is 
\begin{equation}
Y = m(\textbf{X},\textbf{Z}) + \sigma(\textbf{X},\textbf{Z})\epsilon,
\end{equation}
 where $\epsilon$ has zero mean and constant variance and is independent from  $\vX$ and $\vZ$.
 The goal is to test the null hypothesis that $\vZ$ does not contribute to the regression function, i.e. 
\begin{equation} \label{rel.H0} 
H_0: m(\vx,\vz) = m_1(\vx).
\end{equation} 

The idea for testing this hypothesis is to treat the covariate values $\vZ_i$, 
$i=1,\ldots,n$, as the levels of high-dimensional one-way ANOVA design, with the 
null hypothesis residual $\hat\xi_i = Y_i - \hat m_1(\vx_i)$ being the observation
from factor level $\vZ_i$, and construct an ANOVA based test statistic. 
Because the asymptotic theory for high-dimensional
ANOVA requires more than one observation per factor level (Akritas and 
Papadatos, 2004), we will employ smoothness conditions, which will be 
stated below, and augment each factor level by including residuals 
from nearby covariate values. With a univariate covariate, such factor level 
augmentation was carried out in Wang, Akritas and Van Keilegom (2008) and 
Zambom and Akritas (2012) by ordering the covariate values and including 
in each factor level the residuals corresponding to neighboring covariate values. 
With a multivariate covariate, the challenge is to order the factor levels, and
hence the residuals, in a meaningful way resulting in a test statistic with 
good power properties. To do so, we propose to replace each $\vZ_i$ 
by a nonlinear version of Bair, Hastie, Paul and Tibshirani's (2004) first 
supervised principal component (PC), $P_{\theta,i}=\vZ_i^T\vC_\theta$. 
The subscript 
$\theta$ will be explained below when the supervised PC is introduced.

Having a univariate surrogate of $\vZ$, we augment each cell $P_{\theta,i}=\vZ_i^T\vC_\theta$ by including additional $p-1$, for $p$ odd, residuals $\hat\xi_\ell$ which correspond to the $p-1$ nearest neighbors $P_{\theta,\ell}$ 
of $P_{\theta,i}$.
To be specific, we consider the $(\hat\xi_i,P_{\theta,i})$, $i=1,\ldots,n$, arranged so that $P_{\theta,i_1} < P_{\theta,i_2}$ whenever 
$i_1<i_2$, and for each $P_{\theta,i}$, $(p-1)/2<i\le n-(p-1)/2$, define the nearest neighbor window $W_i$ as  
 \bqan\label{def.Wi}
W_i(\vC_{\theta}) = \left\{j:|\hat{F}_{P}(P_{\theta,j}) - \hat{F}_{P}(P_{\theta,i})| \leq 
\frac{p - 1}{2n}\right\},
 \eqan
where $\hat{F}_{P}$ is the empirical distribution function of  $P_{\theta,1},\ldots,P_{\theta,n}$. $W_i(\vC_{\theta})$ defines the augmented cell corresponding to $P_{\theta,i}$. Note that the augmented cells are defined as sets of indices rather than as sets of $\hat \xi_i$ values. 
The vector of $(n-p+1)p$ constructed "observations" in the augmented one-way ANOVA design is  
 \bqan\label{hat.vxi}
 \hat \vxi_{\vC_{\theta}} = (\hat{\xi}_j, j \in W_{(p-1)/2+1}(\vC_{\theta}), \ldots , \hat{\xi}_j, j \in W_{n-(p-1)/2}(\vC_{\theta}))^T .
 \eqan
\indent Let $\mbox{MST}$ and $\mbox{MSE}$
denote the balanced one-way ANOVA mean squares due to treatment and error, respectively, computed on the data $\hat\vxi_{\vC_{\theta}}$. The proposed test statistic is based on
 \bqan\label{rel.proposedTS}
MST-MSE.
 \eqan

In this paper the residuals $\hat\xi_i = Y_i - \hat m_1(\vx_i),\  i=1,\ldots,n,$ will 
be formed using the local polynomial of order $q$ regression estimator 
defined by
\bqan \label{local_poly_estimator}
\hat{m}_1(\vX_{i}) = \ve_1^T\left(\mX_{\vX_{i}}^T\mW_{\vX_{i}}\mX_{\vX_{i}}\right)^{-1}\mX_{\vX_{i}}^T\mW_{\vX_{i}}\vY = \sum_{j=1}^{n}\tilde w(\vX_i, \vX_j)Y_j, \mbox{      } i = 1,\ldots,n,
\eqan
where $\mW_{\vx} = diag\{K_{H_n}(\vX_{1} - \vx), \ldots, K_{H_n}
(\vX_{n} - \vx) \}$, with $K_{H_n}(\vx) = |H_n|^{-1/2}K(H_n^{-1/2}\textbf{x})$
for $K(\cdot)$ 
a bounded, non-negative $r$-variate kernel function of bounded variation 
and with bounded support and $H_n^{1/2}$ is a symmetric positive definite 
$r\times r$ bandwidth matrix, and 
\bqa
\mX_{\vx} =  \left( \begin{array}{cccc}
1      & (\vX_1 - \vx)^T & \mbox{vech}^T\left\{(\vX_1 - \vx)(\vX_1 - \vx)^T\right\} & \ldots\\
\vdots & \vdots          & \vdots & \ldots\\
1      & (\vX_n - \vx)^T & \mbox{vech}^T\left\{(\vX_n - \vx)(\vX_n - \vx)^T\right\} & \ldots \end{array} \right),
\eqa
with vech denoting the half-vectorization operator, is the $n \times \gamma_{r,q}$ design matrix, where
\bqa \label{def.gamma_d}
\gamma_{r,q} = \sum_{j=0}^q\mathop{\sum_{k_1=0}^{j}\ldots\sum_{k_r=0}^{j}}_{k_1+\ldots + k_r= j}1.
\eqa

We finish this section with a description of the construction of the first non-linearly supervised principal component $P_\theta=\vZ^T\vC_\theta$. Let $p_j, j=1,\ldots,s$, 
denote the p-values obtained by applying the test of Zambom and Akritas (2012) for testing the hypothesis $H_0^j$ which specifies that $Z_j$, the $j$th coordinate of 
$\vZ$, has no effect on the regression function of the model with response variable $Y$ and covariate vector $(\vX,Z_j)$. For a threshold parameter $\theta$, define the index
set ${\cal J}=\{j: p_j<\theta\}$ and let $\vZ_{\cal J}$ be the vector formed from the ${\cal J}$ coordinates of $\vZ$.  Then, $P_{\theta}=\vZ^T\vC_\theta$ is the first principal component of $\vZ_{\cal J}$. Note that some entries of $\vC_{\theta}$ are equal to 0, corresponding to the coordinates of $\vZ$ with $p_j$ greater or equal to $\theta$. It is important to keep in mind that the observable vector of first nonlinear principal components,
$\vP_{\theta}=(P_{\theta,1},\ldots,P_{\theta,n})$, depends on the estimated residuals $\hat \xi_i, i=1,\ldots,n$, to the extend that ${\cal J}$, and hence $\vC_\theta$, depend on them.

 \subsection{Asymptotic null distribution}
 
\begin{thm} \label{thm:main} 
Assume that the marginal densities $f_{\vX}$, 
$f_{\vZ}$ of $\vX$, $\vZ$, respectively, are bounded away from zero, the $q+1$ derivatives of $m_1(\vx)$ 
are uniformly continuous and bounded, that 
$\sigma^2(.,\vz) := E(\xi^2|\vZ^T\vC)$ is Lipschitz continuous,  $\sup_{\vx,\vz}\sigma^2(\vx,\vz)<\infty$, and
$E(\epsilon^4_i)<\infty$. 
Assume that the eigenvalues, $\lambda_i,\  i=1,\ldots,r$, of the bandwidth matrix $H_n^{1/2}$ converge to zero at the same rate and satisfy
\bqan\label{cond.lambda_poly}
n\lambda_i^{4(q+1)}\to 0\ \mbox{ and }\ \frac{n\lambda_i^{2r}}{(\log n)^2}\to \infty,\  i=1,\ldots,r. 
\eqan
Then, under $H_0$ in (\ref{rel.H0}), the asymptotic distribution of the test statistic
in (\ref{rel.proposedTS}) is given by
$$n^{1/2} (MST-MSE) \dkonv N(0,\frac{2p(2p-1)}{3(p-1)}\tau^2),$$
where $\tau = \int\left[\int\sigma^2(\vx,\vz)f_{\vX|\vZ^T\vC=\vz^T\vC}(\vx)d\vx\right]^2f_{\vZ^T\vC}(\vz^T\vC)d(\vz^T\vC)$.
\end{thm}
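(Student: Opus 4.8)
The plan is to follow the strategy developed for a univariate ANOVA factor in Wang, Akritas and Van Keilegom (2008) and Zambom and Akritas (2012), adapted to the present situation where the factor levels are ordered through the data-driven univariate surrogate $P_{\theta,i}=\vZ_i^T\vC_\theta$. The first, and main, step is to replace the estimated residuals by the true errors. Under $H_0$, $\hat\xi_i=\xi_i+\delta_i$ with $\xi_i=\sigma(\vX_i,\vZ_i)\epsilon_i$ and $\delta_i=m_1(\vX_i)-\hat m_1(\vX_i)=-\sum_j\tilde w(\vX_i,\vX_j)\xi_j+b_i$, where, using $f_{\vX}$ bounded away from zero and the boundedness of the $q+1$ derivatives of $m_1$, standard local polynomial theory gives $\sup_i|b_i|=O_p(\lambda^{q+1})$ for the bias part and the uniform bound $\sup_i|\sum_j\tilde w(\vX_i,\vX_j)\xi_j|=O_p\big((\log n/(n\lambda^r))^{1/2}\big)$ for the stochastic part (the $\log n$ arising from the uniform-in-$i$ control, e.g.\ via a maximal inequality and the bounded-support, bounded-variation properties of $K$). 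Since $MST-MSE$ is a quadratic form in $\hat\vxi_{\vC_\theta}$, I would expand it into the pure-$\xi$ part, the cross terms in $\xi_j\delta_k$, and the pure-$\delta$ terms: the bias contribution to $n^{1/2}(MST-MSE)$ is of order $n^{1/2}\lambda^{2(q+1)}=(n\lambda^{4(q+1)})^{1/2}=o_p(1)$ by the first bandwidth condition, the stochastic contribution is controlled by the second condition $n\lambda^{2r}/(\log n)^2\to\infty$, and the cross and pure-$\delta$ terms unfold into triple sums in $\tilde w$ and the errors, bounded using $\sup_{\vx,\vz}\sigma^2(\vx,\vz)<\infty$ and $E\epsilon_i^4<\infty$.

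Next I would dispose of the randomness of $\vC_\theta$. Since $\vC_\theta$ is a fixed function of the selected set $\mathcal{J}=\{j:p_j<\theta\}$ and of $\vZ_{\mathcal{J}}$, I would argue — using $f_{\vZ}$ bounded away from zero and the good behaviour of the null residuals, hence of the $p_j$ — that $\mathcal{J}$ and thus $\vC_\theta$ stabilize, so that one may work conditionally on $\vC_\theta$ and treat $P_i:=\vZ_i^T\vC_\theta$ as a fixed univariate covariate with density $f_{\vZ^T\vC}$ bounded away from zero; the limiting constant is then expressed through $\bar\sigma^2(u):=E(\xi^2\mid\vZ^T\vC=u)$ and $\tau^2:=E\{\bar\sigma^4(\vZ^T\vC)\}$, the quantity displayed in the statement.

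With true errors and a fixed $P$-ordering, $W_i=\{i-(p-1)/2,\dots,i+(p-1)/2\}$ in $P$-order, and elementary balanced one-way ANOVA algebra gives
\[
MST-MSE=\frac{1}{(n-p+1)(p-1)}\sum_{i}\ \sum_{j\ne k\in W_i}\xi_j\xi_k+o_p(n^{-1/2})=\frac{1}{(n-p+1)(p-1)}\sum_{j\ne k}a_{jk}\,\xi_j\xi_k+o_p(n^{-1/2}),
\]
where $a_{jk}=(p-|j-k|)_+$ in $P$-order and the $o_p(n^{-1/2})$ absorbs the grand-mean correction and the $n-p$ versus $n-p+1$ discrepancies. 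Thus $n^{1/2}(MST-MSE)=\frac{1}{n^{1/2}(p-1)}\sum_{j\ne k}a_{jk}\xi_j\xi_k+o_p(1)$ is, up to $o_p(1)$, a banded ($(p-1)$-dependent) quadratic form in the conditionally independent, mean-zero, heteroscedastic errors. I would prove its asymptotic normality by a martingale central limit theorem: putting the errors in $P$-order and letting $\mathcal{F}_k=\sigma(\xi_1,\dots,\xi_k,\text{covariates})$, write the form as $\sum_k D_k$ with martingale differences $D_k=\frac{2}{n^{1/2}(p-1)}\,\xi_k\sum_{j<k}a_{jk}\xi_j$. Then one checks (i) the conditional variance $\sum_k E(D_k^2\mid\mathcal{F}_{k-1})=\frac{4}{n(p-1)^2}\sum_{j<k}a_{jk}^2E(\xi_j^2\xi_k^2)+o_p(1)$, where, since indices sharing a window have $P$-values converging to a common value $u$ and $\bar\sigma^2$ is Lipschitz, $\frac1n\sum_{j}E(\xi_j^2\xi_{j+\delta}^2)\to\tau^2$, so the limit is $\frac{4\tau^2}{(p-1)^2}\sum_{\delta=1}^{p-1}(p-\delta)^2=\frac{4\tau^2}{(p-1)^2}\cdot\frac{(p-1)p(2p-1)}{6}=\frac{2p(2p-1)}{3(p-1)}\tau^2$, which produces the stated constant; and (ii) a Lyapunov condition, which follows from $E\epsilon_i^4<\infty$, $\sup\sigma^2<\infty$ and the boundedness of the $a_{jk}$ together with the $(p-1)$-dependence.

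The step I expect to be the main obstacle is the first one — showing that replacing $\hat\xi_i$ by $\xi_i$ costs only $o_p(n^{-1/2})$ in the quadratic form. The difficulty is the clash between two geometries: the local polynomial weights $\tilde w(\vX_i,\vX_j)$ are organized by the $\vX$-design, while the ANOVA windows $W_i$ are organized by the random $P$-ordering, so the cross terms $\sum_i\sum_{j,k\in W_i}\xi_j\delta_k$ unfold into triple sums over $\tilde w$ and the errors whose moment bounds have to be controlled uniformly over the random reordering; this is the technical heart of the proof. A secondary obstacle is making the stabilization of $\vC_\theta$ rigorous, which is precisely why the assumptions that $f_{\vZ}$ is bounded away from zero and that the $q+1$ derivatives of $m_1$ are bounded (so that the null-model residuals, hence the $p_j$, are well behaved) are imposed.
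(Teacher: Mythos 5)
Your overall skeleton (decompose $\hat\xi_i=\xi_i-\Delta_{m_1}(\vX_i)$, reduce to a banded quadratic form in the true errors, count window overlaps so that a pair with lag $s$ appears in $p-s$ windows, giving $\sum_{s=1}^{p-1}(p-s)^2=(p-1)p(2p-1)/6$ and hence the constant $\frac{2p(2p-1)}{3(p-1)}\tau^2$) matches the paper, and the negligibility of the cross and squared-bias terms under the bandwidth conditions is handled there the same way (deferring to Zambom and Akritas, 2012). The genuine gap is your treatment of the data-driven direction $\vC_\theta$. You propose that ${\cal J}=\{j:p_j<\theta\}$ ``stabilizes'' so that one may condition on $\vC_\theta$ and treat $P_i=\vZ_i^T\vC_\theta$ as a fixed covariate; but the theorem is proved under $H_0$, where no coordinate of $\vZ$ affects the regression, so the $p_j$ are approximately uniform, the thresholded set ${\cal J}$ is a noise-driven random subset with no deterministic limit, and $\vC_\theta$ does not stabilize. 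Worse, $\vC_\theta$ is computed from the very residuals $\hat\xi_i$ entering the quadratic form, so conditioning on it destroys the mean-zero independence of the errors and with it the martingale-difference structure $D_k=\frac{2}{n^{1/2}(p-1)}\xi_k\sum_{j<k}a_{jk}\xi_j$ on which your CLT rests.

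The paper's key idea, which your proposal lacks, is to make every step uniform in $\vC$: it shows $\sup_{\vC}\sup_t|P(n^{1/2}\vxi_{\vC}^TA\vxi_{\vC}/\mbox{sd}\le t)-\Phi(t)|\to 0$ by (i) replacing $A$ by the block-diagonal $A_d$ uniformly in $\vC$ (Lemma \ref{lemma:Ad}, Corollary \ref{corol:normal}), (ii) a big-block/small-block split with $b_n\sim n^{2/3}$, $r_n\sim n^{1/3}$ so that the big blocks $U_i(\vC)$ are independent, (iii) a Berry--Esseen bound for $S_U(\vC)$ and a fourth-moment bound for $S_V(\vC)$, both with $\sup_{\vC}$ outside (Lemmas \ref{lemma:Su} and \ref{lemma:Sv}), and (iv) uniform-in-$\vC$ variance identification via the Lipschitz/DKW Lemmas \ref{lemma:Fn} and \ref{lemma:Op}. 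Because these bounds do not depend on $\vC$, the normal approximation applies to whatever residual-dependent $\vC_\theta$ the data select, which is precisely the point your pointwise (fixed-ordering) martingale CLT cannot deliver. To repair your argument you would need either such a uniform-in-$\vC$ Berry--Esseen-type statement for your martingale sums, or a rigorous proof that $\vC_\theta$ is asymptotically independent of the errors, which is not available under $H_0$.
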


An estimate of $\tau^2$ can be obtained by modifying Rice's (1984) estimator as follows
\begin{equation}
\hat{\tau}^2 = \frac{1}{4(n-3)}\sum_{j=2}^{n-2}(\hat{\xi}_j-\hat{\xi}_{j-1})^2(\hat{\xi}_{j+2}-\hat{\xi}_{j+1})^2.
\end{equation}

Asymptotic theory under local additives and under general local alternatives is derived
in Zambom (2012). As these limiting results show, the asymptotic mean of the test statistic $MST - MSE$ is positive under alternatives. Thus, the test procedure rejects the null hypothesis for "large" values of the test statistic.

\subsection{Simulations: Model Checking Procedures}\label{sec:simulations:model_checking}

\indent \indent We compare the proposed ANOVA-type hypothesis test for groups with the generalized likelihood ratio test of Fan and Jiang (2005). The data is generated under three situations: a homoscedastic additive model, a homoscedastic non-additive model, and a heterocedastic non-additive model. 
All covariates, in all models, are independent standard normal.  The homoscedastic additive model is 
\bqan\label{model1}
Y = X_1 + \theta(Z_1 + Z_2 + Z_3) + \epsilon,\ \mbox{ where }\ \epsilon  \sim N(0,1),
\eqan
the homoscedastic non-additive model is
\bqan\label{model2}
Y=X_1^{X_2}(1+\theta(Z_1 + Z_2)) + X_2^{\theta(Z_1 + Z_2)}  + \epsilon,
\ \mbox{ where }\ \epsilon  \sim N(0,.1^2),
\eqan
and the heterocedastic non-additive model is
\bqan\label{model3}
Y=X_1 + \theta \sin(Z_1 Z_2) + Z_1 Z_2 \epsilon, \ \mbox{ where }\ \epsilon  \sim 
N(0,.5^2).
\eqan
In each situation we simulate 2000 data sets ot size $n = 200$. All simulations were performed in R.

In order to evaluate the effect of the threshold parameter $\theta$ we applied our test procedure with $\theta=0.05$ and $\theta=0.2$. Moreover, in each case we considered two rules to form the set of covariates from which the first supervised principal component is obtained. Rule 1 consists of using only the covariates with p-value less than $\theta$, and in Rule 2 we consider the set of covariates chosen from Rule 1 
and add to the set the covariate with the smallest p-value among the remainder covariates. In each case, if the number of  selected covariates is less than two the set is formed from the two with the smallest p-value. Thus the simulations consider four versions of our test statistic: a) Rule 1 with $\theta=0.05$, b) Rule 1 with $\theta=0.2$,
c) Rule 2 with $\theta=0.05$, d) Rule 2 with $\theta=0.2$. All four versions of our test statistic use windows of $p=11$.

Tables \ref{table:GroupLASSO_test1}, \ref{table:GroupLASSO_test2}, and
\ref{table:GroupLASSO_test3}, show the simulation results for models (\ref{model1}),(\ref{model2}), and (\ref{model3}), respectively. It is seen that the proposed test 
procedure is robust to the choice of the threshold parameter, and to the rules for selecting the set of covariates from which the first supervised principal component is obtained. The Generalized Likelihood Ratio test, which is designed for homoscedastic additive models, achieves better power under model (\ref{model1}), but is extremely liberal under heteroscedasticity and its power for the non-additive alternatives of model 
(\ref{model3}) is mainly less than its level; see Table \ref{table:GroupLASSO_test3}. Table \ref{table:GroupLASSO_test2} suggests that the GRLT has low power against non-additive alternatives even in the homoscedastic case.

\renewcommand{\baselinestretch}{1}
\begin{table}[ht]
\caption{Rejection rates for the homocedastic additive model}  \label{table:GroupLASSO_test1}
\centering     
\begin{tabular}{l c c c c c}  
\hline\hline  
& \multicolumn{5}{c}{$\theta$}  \\
  \cmidrule(r){2-6}
Method  & 0 & .2 & .4 & .6 & .8   \\ [0.5ex]  
\hline
 ANOVA-type-a  & .066 & .404 & .691 & .706 & .751  \\
 ANOVA-type-b  & .060 & .378 & .613 & .689 & .733 \\
 ANOVA-type-c  & .066 & .396 & .600 & .692 & .749\\
 ANOVA-type-d  & .057 & .375 & .618 & .685 & .718 \\
 GRLT                & .048 & .883 & 1 & 1 & 1 \\
\hline
\hline    
\end{tabular} 
\end{table}
\renewcommand{\baselinestretch}{1.7}

\renewcommand{\baselinestretch}{1}
\begin{table}[ht]
\caption{Rejection rates for the homocedastic non-additive model}  \label{table:GroupLASSO_test2}
\centering     
\begin{tabular}{l c c c c c }  
\hline\hline  
& \multicolumn{5}{c}{$\theta$}  \\
  \cmidrule(r){2-6}
Method  & 0 & .02 & .04 & .06 & .08  \\ [0.5ex]  
\hline
 ANOVA-type-a    & .051 & .202 & .522 & .693 & .724 \\
 ANOVA-type-b    & .047 & .192 & .560 & .710 & .739 \\
 ANOVA-type-c    & .050 & .193 & .520 & .679 & .711 \\
 ANOVA-type-d    & .047 & .161 & .510 & .676 & .733 \\
 GRLT                  & .052 & .059 & .117 & .235 & .379 \\
\hline
\hline    
\end{tabular} 
\end{table}
\renewcommand{\baselinestretch}{1.7}

\renewcommand{\baselinestretch}{1}
\begin{table}[ht]
\caption{Rejection rates for the heterocedastic non-additive model}  \label{table:GroupLASSO_test3}
\centering     
\begin{tabular}{l c c c c c }  
\hline\hline  
& \multicolumn{5}{c}{$\theta$}  \\
  \cmidrule(r){2-6}
Method  & 0 & .3 & .6 & 1 & 2  \\ [0.5ex]  
\hline
 ANOVA-type-a    & .035 & .168 & .503 & .654 & .789 \\
 ANOVA-type-b    & .040 & .172 & .529 & .663 & .767 \\
 ANOVA-type-c    & .037 & .161 & .501 & .651 & .757 \\
 ANOVA-type-d    & .036 & .190 & .520 & .657 & .742 \\
 GRLT                  & .584 & .585 & .535 & .439 & .297 \\
\hline
\hline    
\end{tabular} 
\end{table}
\renewcommand{\baselinestretch}{1.7}

\section{Nonparametric Group Variable Selection}\label{section:Variable_Selection}

\indent \indent In this section we will present a test-based group variable 
selection. For this purpose we will make a slight change in notation by letting
$\vX$ denote the entire vector of covariates. Thus, we consider the nonparametric regression model
\begin{equation}
Y_{i} = m(\textbf{X}_{i}) + \sigma(\vX_i) \varepsilon_{i},\  i = 1,\ldots,n,
\end{equation}
 where $\varepsilon_{i}$ is the independent error with zero mean and constant variance.  Suppose that the covariates are classified in $d$ groups 
identified by the indices $J_\ell = \{j: X_j \mbox{ belongs to group $\ell$}\}$, 
$\ell = 1, \ldots, d$, and let $s_\ell$ denote the size of group $J_\ell$. Moreover,
we assume sparseness in the sense that only the variables in a
subset $I_0=\{J_1,\ldots,J_{d_0}\}\subset \{J_1,\ldots,J_d\}$ of the groups influence the regression function. Finally, we will assume the dimension 
reduction model of Li (1991), i.e.
\bqan\label{dim.red.model}
m(\vx)=g(\vB\vx), \ \mbox{where  $\vB$ is a $K\times (\sum_i s_i)$ matrix}.
\eqan

Define the hypothesis
\bqa 
H_{0}^\ell: m(\vx) = m_1(\vx_{(-J_\ell)}), \mbox{    } \ell = 1,\ldots, d,
\eqa
where $\vx_{(-J_\ell)}$ is the set of all covariates except those whose 
index are in $J_\ell$. Under the dimension reduction model 
(\ref{dim.red.model}), this hypothesis 
can be written equivalently as
\begin{equation} \label{H0_benjamini} 
H_{0}^\ell: g(\vB\vx) = g(\vB_{(-J_\ell)}\vx_{(-J_\ell)}), \mbox{    } \ell = 1,\ldots, d,
\end{equation} 
where $\vB_{(-J_\ell)}$ is the $K\times (d-s_\ell)$ matrix obtained by omitting the columns of $\vB$ with indices in $J_\ell$. Let $\widehat\vB$ denote the Sliced Inverse Regression (SIR) estimator 
of $\vB$, and $\widehat\vB_{(-J_\ell)}$ be the corresponding submatrix. With this notation, let  
\bqa
z_\ell = \sqrt{n}(MST_\ell - MSE_\ell)/\sqrt{\frac{2p(2p-1)}{3(p-1)}}\hat\tau_\ell^4
\eqa
be the test statistic  for testing the hypothesis
(\ref{H0_benjamini}) with $\widehat\vB_{(-J_\ell)}\vX_{(-J_\ell)}$ playing the role
of $\vX$ in Theorem \ref{thm:main}, and $\widehat\vB_{(J_\ell)}\vX_{(J_\ell)}$ playing the role of $\vZ$, where $\vX_{(J_\ell)}$ is  the set of all covariates
whose index are in $J_\ell$   and $\widehat\vB_{(J_\ell)}$ is the corresponding submatrix of $\widehat\vB$.

In this context we will describe the following group variable selection procedure using backward elimination based on the Benjamini and Yekuteli (2001) method for controlling the false discovery rate (FDR):
\ben
  \item\label{ret.step1} Compute the $p$-value for $H_0^\ell$ as $\pi_{\ell} = 1-\Phi(z_\ell)$, $\ell=1,\ldots,d$.
  \item\label{ret.step2} Compute 
  \bqan\label{B-Y-k}
k = \max\left\{ i: \pi_{(\ell)} \leq \frac{\ell}{d}\frac{\alpha}{\sum_{j=1}^{d}j^{-1}}\right\}
\eqan
for a choice of level $\alpha$, where $\pi_{(1)},\ldots,\pi_{(d)}$ are the ordered p-values. If $k=d$ stop and retain all groups. If $k<d$ 
  \ben
  \item update $\vx$ by eliminating the covariates of the group corresponding to 
  $\pi_{(d)}$,  
    \item update  $d$ to $d-1$,
  \item update $\widehat\vB$ by eliminating the columns corresponding to the deleted variables, 
  \item update the test statistic  $z_\ell$, $\ell=1,\ldots,d$.
  \een 
 \item Repeat steps \ref{ret.step1} and \ref{ret.step2}, with the updated 
 $z_\ell$, $\ell=1,\ldots,d$.
\een
{\bf Remark}  Another approach for constructing a group variable selection procedure is to use a single application of the Benjamini
and Yekuteli (2001) method for controlling the false discovery rate (FDR). This is
similar to one of the two procedures proposed in Bunea et al. (2006). However, this did not perform well  in simulations and is not recommended. 
A backward elimination approach was used in Li, Cook and Nachtsheim (2005), but without incorporating multiple testing ideas.\\

\subsection{Simulations: Variable selection procedure}\label{sec.sim_var.sel}

\indent \indent In this section we compare the variable selection based on the ANOVA-type test to the Group Lasso proposed by Yuan and Lin (2006). We study the behavior of the selection for two different scenarios, one with a continuous response and another with a binary response.

For the continuous response scenario the data is generated according to the models
\bqa
\mbox{Model 1}: Y &=& X_3^3 + X_3^2 + X_3 + (1/3)X_6^3 - X_6^2 + (2/3)X_6 + \epsilon\\
\mbox{Model 2}: Y &=& sin(X_3^3 + X_3^2 + X_3) + (1/3)X_6^3 - X_6^2 + (2/3)X_6 + \epsilon\\
\mbox{Model 3}: Y &=& 10sin(X_3^3 + X_3^2 + X_3) + 5sin((1/3)X_6^3 - X_6^2 + (2/3)X_6) + \epsilon
\eqa
where $X_i = (Z_i + W)/\sqrt{2}$, $Z_i, i=1,\ldots,16$ and $W$ iid $N(0,1)$, and $\epsilon \sim N(0,2^2)$. Thus, for Models 1, 2 and 3 there are 16 groups of three covariates each, represented by the polynomial terms. The only groups that are significant are groups 3 and 6. We run 1000 simulations of data sets of size $n = 100$. Table \ref{table:GLasso1} shows the mean number of correct and incorrect groups selected by the ANOVA-type variable selection and Group Lasso using the $C_p$ criterion. It is seen that Group Lasso tends to select more groups that are not significant to the regression, while both methods perform competitively in selecting the significant groups.

\renewcommand{\baselinestretch}{1}
\begin{table} [ht]
\caption{Results for the ANOVA-type and Group Lasso}  \label{table:GLasso1}
\centering     
\begin{tabular}{l l c c }  
\hline\hline  
Model & Method & Corr.Selected & Incorr.Selected \\
\hline
Model 1 & ANOVA-type & 1.80 & .55 \\
        & Group LASSO   & 2 & 4.7 \\
\hline   
Model 2 & ANOVA-type & 1.15 & .81 \\
        & Group LASSO   & 1.59 & 4.21 \\
\hline
Model 3 & ANOVA-type & 1.84 & 0.64 \\
        & Group LASSO   & 1.80 & 6.75 \\
\hline   
\hline    
\end{tabular} 
\end{table} 
\renewcommand{\baselinestretch}{1.7}



For the second scenario, we consider the following three logistic regression models.
\bqa
\mbox{Models 1 and 2}:\ \ p_j(\vX) = \frac{1}{1 + exp(-\vbeta_j^T(1,\vX)^T)},\ j = 1, 2,
\eqa
where $\vX=(X_1,\ldots,X_{15})$ are iid $U(0,1)$, grouped sequentially in 5 groups of 3 covariates each, and
\bqa
\vbeta_1 &=& (1, -2.2, 2, 0, 0, 0, 0, 0, 0, 0, 1, 2, 0, 0, 0, 0)^T \\
\vbeta_2 &=& (1, -2.2, 3, 0, 0, 0, 0, 0, 0, 0, 1, 3, 0, 0, 0, 0)^T.
\eqa
\bqa
\mbox{Model 3}:\ \ \ p_3(\vX) = \frac{1}{1 + exp\left(-18\sin(\pi X_2) - 18\sin(\pi X_8)\right)},
\eqa
where $\vX=(X_1, \ldots,X_{12})$, with $X_1, \ldots,X_{11}$ iid $U(0,3)$ and $X_{12} \sim N(-3,1)$ independent of the others, are grouped 
sequentially in  4 groups of 3 covariates each.

The results in Table \ref{GLasso_Binary} are based on 1000 simulation runs using $n=100$ for Models 1 and 2, and $n=200$ for Model 3. It is seen that for Models 1 and 2 the number of correctly selected covariates by either procedure is low. This is probably due to the smaller sample size and the larger number of covariates. For Model 3, the Group Lasso fails to select covariates, while the ANOVA-type procedure seems to perform very well. In summary, the simulation results suggest that the ANOVA-type variable selection procedure outperforms the Group Lasso when the logistic regression model involves a non-linear function of the covariates, and has competitive performance in the other cases.

\renewcommand{\baselinestretch}{1}
\begin{table} 
\caption{Results for logistic regression}  \label{GLasso_Binary}
\centering     
\begin{tabular}{l l c c}  
\hline\hline  
Model & Method & Corr.Selected & Incorr.Selected  \\ [0.5ex]  
\hline
Model 1 & ANOVA-type & .340 & .261   \\
           & Group LASSO   & .197 & .032   \\
\hline   
Model 2 & ANOVA-type & .287 & .312   \\
           & Group LASSO   & .100 & .021  \\
\hline   

Model 3 & ANOVA-type & 1.223 & 0.080  \\
& Group LASSO   & 0.040 & 0.039   \\
\hline   
\hline    
\end{tabular} 
\end{table} 
\renewcommand{\baselinestretch}{1.7}

\section{Real Data Example} \label{sec:Real_Data}

\indent \indent The proposed procedure will be illustrated with an analysis of
the colon cancer dataset of Alon et al. (1999). The dataset was obtained from the Affymetrix technology and shows expression levels of 40 tumor and 22 
normal colon tissues of 6,500 human genes. A selection of 2,000 genes with highest minimal intensity across the samples has been made by Alon et al. (1999) and is 
publicly available at http://microarray.princeton.edu/oncology. Different clustering 
methods have been applied to this data set in several previous studies including
Dettling and Buhlman (2002, 2004),  and Ma,  Song and Huang (2007). 

To illustrate the proposed ANOVA-type group variable selection procedure we first apply 
a clustering method to form the groups. We chose the  
supervised clustering procedure {\it Wilma} proposed by Dettling and Buhlman (2002) which
is available in the package supclust in R. {\it Wilma} requires as input the number of clusters to be formed, and we specified 60, 55, 50 and 45 clusters. The next step of
the proposed procedure requires dimension reduction through SIR. 
However, because 
the number of genes (2,000) is much larger than the sample size (62) it is not 
possible to use SIR straightforward. Therefore to estimate $\vB$, we ran SIR on the set of predictors
composed of the first supervised principal component of each cluster. 

We also ran the Group Lasso procedure for binary responses using the package
grplasso in R on the same clusters/groups returned by Wilma. However, a corresponding modification is needed for the calculation of the degrees of freedom needed for the application of 
the $C_p$ criterion;
see Yuan and Lin (2006). This calculation requires the estimator $\hat \beta$ from 
fitting all individual covariates. 
Since the number of covariates is much larger than the sample size, we obtain an approximation 
to the required estimator by first obtaining the estimator $\hat\beta_P$ from
fitting the first PC from each cluster. Since the PCs are linear combinations of 
the covariates in each cluster, having a coefficient for a cluster's PC translates into coefficients
for the covariates in that cluster.

\renewcommand{\baselinestretch}{1}
\begin{table} [htt]
\caption{Results for Colon data set}  \label{Colon}
\centering     
\begin{tabular}{l l c }  
\hline\hline  
No. Initial Clusters & Procedure & Clusters Selected\\
60 & ANOVA-type & 1, 2, 3, 4, 5, 6, 8, 9, 10, 11, 12, 15, 16,\\
& & 18, 23, 24, 25, 28, 31, 33, 34, 39, 42, 46 \\
   & Group Lasso & 1, 2, 3, 8, 9, 16, 17, 37 \\
\hline   
55 &    ANOVA-type &  1, 2, 3, 4, 5, 6, 8, 10, 11, 12, 15, 16,\\
& & 17, 22, 23, 24, 25, 26, 33, 38, 42, 45 \\
   & Group Lasso &  2, 3, 7, 9, 13, 18\\
\hline   
50 & ANOVA-type & 1, 2, 3, 4, 5, 6, 8, 10, 11, 12, 14, 15, 16, \\
    & &18, 21, 23, 24, 25, 26, 36, 40, 45, 47, 48, 49\\
   & Group Lasso & 2, 3, 7, 9, 27, 29\\
\hline   
45 & ANOVA-type & 1, 2, 3, 4, 5, 6, 8, 10, 11, 12, 15, 16,\\
& & 18, 21, 22, 23, 24, 25, 31, 34, 37, 40, 42, 44, 45\\
   & Group Lasso & 2, 3, 7, 9, 10, 16\\
\hline   
\hline    
\end{tabular} 
\end{table} 
\renewcommand{\baselinestretch}{1.7}

Table \ref{Colon} shows the groups selected by the proposed group variable selection and the group lasso procedure for the different specified number of clusters returned by Wilma.

We note that there is significant overlap in the genes included in the clusters selected
by each method across the different numbers of total clusters specified. Thus, the different
number of total clusters specified is not critical for selecting the important genes. The proposed method selects more clusters, which is contrary to the simulation results. This is probably due to the linear link function for the logit used in grplasso. For example, when the
logit of the fitted probability of cancerous tissue is plotted against the first PC of cluster 4, 
which is selected by the proposed method but not by Group Lasso, it shows a nonlinear effect (left panel of Figure \ref{fig.pc}),
whereas the non-linearity is much less pronounced when plotted against the first PC of 
cluster 2, which is selected by the proposed method and by Group Lasso (right panel of 
Figure \ref{fig.pc}). 

\begin{figure}
\centering

\begin{tabular}{cc}
 \includegraphics[width=0.47\textwidth]{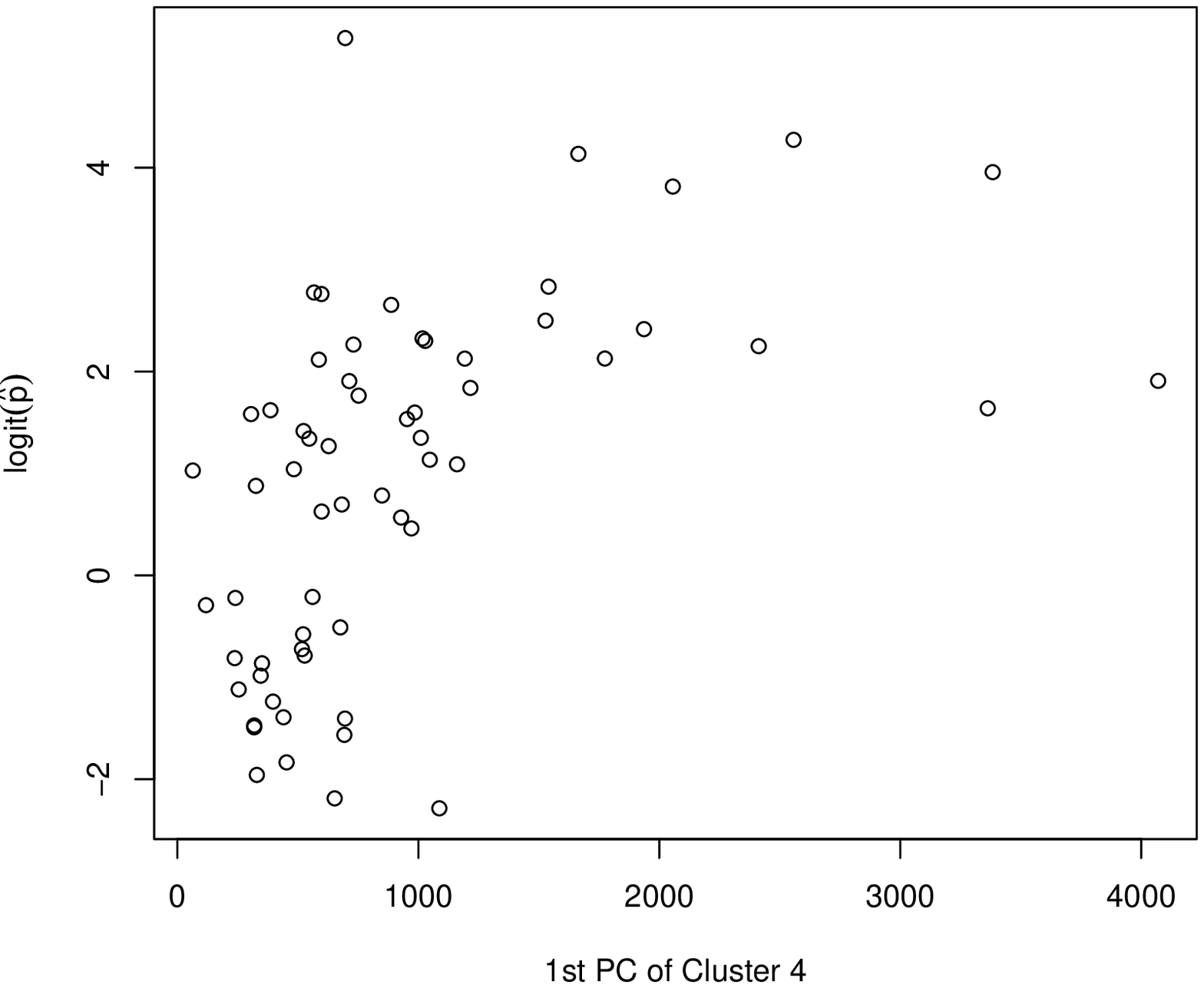} &
\includegraphics[width=0.47\textwidth]{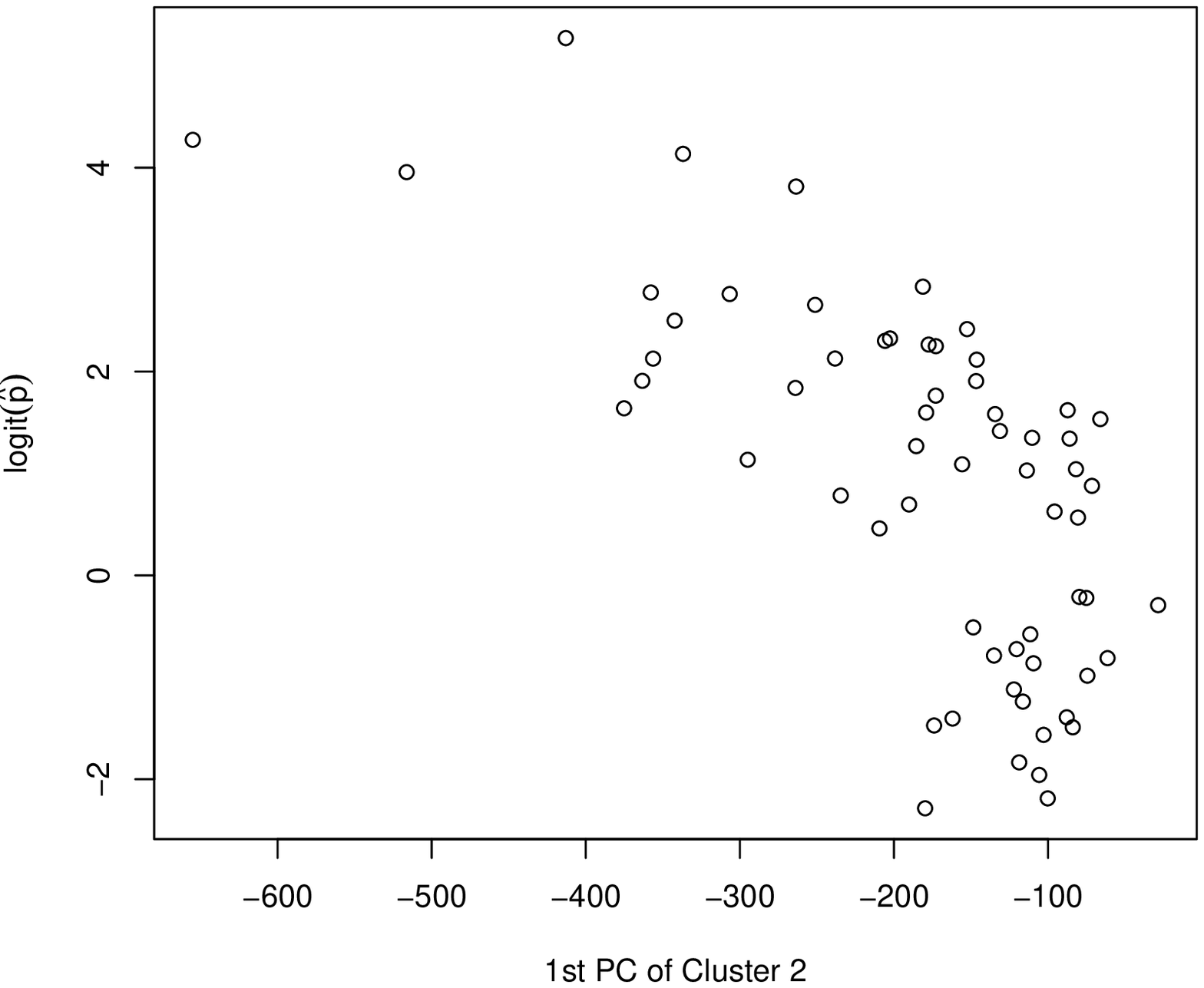}\\
\end{tabular}
\caption{Plot of Group Lasso Logit Estimate Against the First PC of Clusters 4 and 2}
\label{fig.pc}
\end{figure}

\appendix

\section{Appendix}

\begin{proof}[\textbf{Proof of Theorem \ref{thm:main}}]
\indent Under $H_0$ in (\ref{rel.H0}) we write
 \bqa
 \hat{\xi}_i &=& Y_i - \hat{m}_1(\vX_i) + m_1(\vX_i) - m_1(\vX_i) = \xi_i - (\hat{m}_1(\vX_i) - m_1(\vX_i)) \\
 &=& \xi_i - \Delta_{m_1}(\vX_i),
 \eqa
  where $\Delta_{m_1}(\vX_i)$ is defined implicitly in the above relation. Thus, $\hat\vxi_{\vC_{\theta}}$ of relation (\ref{hat.vxi}) is decomposed as
$
  \hat\vxi_{\vC_{\theta}}=\vxi_{\vC_{\theta}} - \vDelta_{m_1\vC_{\theta}},
$
  where $\vxi_{\vC_{\theta}}$ and $\vDelta_{m_1\vC_{\theta}}$ are defined as in (\ref{hat.vxi}) but using 
  $\xi_i$ and $\Delta_{m_1}(\vX_i)$, respectively, instead of $\hat\xi_i$. Note that MST-MSE given in (\ref{rel.proposedTS}) can be written as a quadratic form $\hat\vxi_{\vC_{\theta}}^TA\hat\vxi_{\vC_{\theta}}$ (see  Wang, Akritas and Van Keilegom, 2008), where
\bqan\label{def.matrixA}
A = \frac{np-1}{n(n-1)p(p-1)}\oplus_{i=1}^{n}\bold{J}_{p}-\frac{1}{n(n-1)p}\bold{J}_{np}-\frac{1}{n(p-1)}\bold{I}_{np},
\eqan
$\bold{I}_d$ is a identity matrix of dimension d, $\bold{J}_d$ is a dxd matrix of 1's and $\oplus$ is the Kronecker sum or direct sum. Thus, we can write $\sqrt{n}(\mbox{MST - MSE})$ as 
  \bqan\label{rel.decom.T}
 \sqrt{n} \hat \vxi_{\vC_{\theta}}^TA\hat\vxi_{\vC_{\theta}} = \sqrt{n}\vxi_{\vC_{\theta}}^TA\vxi_{\vC_{\theta}} -\sqrt{n} 2\vxi_{\vC_{\theta}}^TA\vDelta_{m_1\vC_{\theta}} +\sqrt{n} \vDelta_{m_1\vC_{\theta}}^TA\vDelta_{m_1\vC_{\theta}}.
  \eqan

That $\sqrt{n} 2\vxi_{\vC_{\theta}}^TA\vDelta_{m_1\vC_{\theta}}$ and $\sqrt{n} \vDelta_{m_1\vC_{\theta}}^TA\vDelta_{m_1\vC_{\theta}}$ converge in probability to 0 uniformily follows from arguments similar to those used in Zambom and Akritas (2012).

Using Corolary \ref{corol:normal}, to show the asymptotic normality of $\sqrt{n}\vxi_{\vC_{\theta}}^TA\vxi_{\vC_{\theta}}$, it is enough to show that
\bqa
\sup_{\vC}\left|P\left(\frac{\sqrt{n}\vxi_{\vC}^TA_d\vxi_{\vC}}{\frac{2p(2p-1)}{3(p-1)}\tau^2} \leq t\right) - \Phi\left(t\right)\right| \rightarrow 0.
\eqa
Let $b_n \sim n^{2/3}$ and $r_n \sim n/b_n \sim n^{1/3}$ and write
\bqan
\sqrt{n}\vxi_{\vC_{\theta}}^TA_d\vxi_{\vC_{\theta}} &=& \frac{1}{\sqrt{n}}\sum_{i=1}^n\frac{1}{p-1}\sum_{j_1\neq j_2}\xi_{j_1}\xi_{j_2}I\left(j_1,j_2 \in W_i(\vC_{\theta})\right)\nonumber\\
&=&  \frac{1}{\sqrt{n}}\sum_{i=1}^{r_n}U_i(\vC_{\theta}) + \frac{1}{\sqrt{n}}\sum_{i=1}^{r_n}V_i(\vC_{\theta})\nonumber\\
&=& \frac{1}{\sqrt{n}}S_U(C_{\theta}) + \frac{1}{\sqrt{n}}S_V(C_{\theta}), \label{SuSv}
\eqan
where, with $\gamma_i(\vC_{\theta}) = \frac{1}{p-1}\sum_{j_1\neq j_2}\xi_{j_1}\xi_{j_2}I\left(j_1,j_2 \in W_i(\vC_{\theta})\right)$, 
\bqa
U_i(\vC_{\theta}) &=& \gamma_{(i-1)(b_n+p)+1}(\vC_{\theta}) + \ldots +  \gamma_{(i-1)(b_n+p)+b_n}(\vC_{\theta}),\\
V_i(\vC_{\theta}) &=& \gamma_{(i-1)(b_n+p)+b_n+1}(\vC_{\theta}) + \ldots +  \gamma_{i(b_n+p)}(\vC_{\theta}).
\eqa
Note that the $U_i(\vC_{\theta})$ are independent, and the $V_i(\vC_{\theta})$ are independent.

Now, letting $\mbox{sd} = \sqrt{\frac{2p(2p-1)}{3(p-1)}\tau^2}$, we have 
\bqan
&&  \sup_{\vC}\left|P\left(\frac{\sqrt{n}\vxi_{\vC}^TA_d\vxi_{\vC}}{\mbox{sd}} \leq t\right) - \Phi\left(t\right)\right|\nonumber\\
& = & \sup_{\vC}\left|P\left(\frac{S_U(\vC) + S_V(\vC)}{\sqrt{n}\mbox{sd}} \leq t\right) - \Phi\left(t\right)\right|\nonumber\\
& = & \sup_{\vC}\Bigg|P\left(\frac{S_U(\vC)}{\sqrt{n}\mbox{sd}} \leq t - \frac{S_V(\vC)}{\sqrt{n}\mbox{sd}}, \left|\frac{S_V(\vC)}{\sqrt{n}\mbox{sd}}\right| \leq \epsilon \right)\nonumber\\
&&\ \ \ \ \ \ \ \ \ \ \ \ \ \ \ \ \ + P\left(\frac{S_U(\vC)}{\sqrt{n}\mbox{sd}} \leq t - \frac{S_V(\vC)}{\sqrt{n}\mbox{sd}}, \left|\frac{S_V(\vC)}{\sqrt{n}\mbox{sd}}\right| \geq \epsilon \right) - \Phi\left(t\right)\Bigg|\nonumber\\
& \leq & \sup_{\vC}\left|P\left(\frac{S_U(\vC)}{\sqrt{n}\mbox{sd}} \leq t - \frac{S_V(\vC)}{\sqrt{n}\mbox{sd}}, \left|\frac{S_V(\vC)}{\sqrt{n}\mbox{sd}}\right| \leq \epsilon \right) - \Phi\left(t\right)\right| \nonumber\\
&& + \sup_{\vC}P\left(\left|\frac{S_V(\vC)}{\sqrt{n}\mbox{sd}}\right| \geq \epsilon \right) \label{Sn_to_0}
\eqan 
That the second in (\ref{Sn_to_0}) term converges to zero follows from Lemma \ref{lemma:Sv}. That the first term in (\ref{Sn_to_0}) converges to zero follows from Lemma \ref{lemma:Su}, provided we show that 
\bqan \label{Var_Sv}
\Var\left(\frac{S_U(\vC)}{\sqrt{n}}\right) \to \mbox{sd}^2,\ \ \mbox{for any}\ \ \vC. 
\eqan
By (\ref{SuSv}), and because $\frac{S_V(\vC)}{\sqrt{n}\mbox{sd}} \pkonv 0$, (\ref{Var_Sv}) follows from $\sup_{\vC}\Var(\sqrt{n}\vxi_{\\vC}^TA_d\vxi_{\\vC}) \to \mbox{sd}^2$. By the definition of $\vxi_{\vC_{\theta}}^TA_d\vxi_{\vC_{\theta}}$, it is easy to see that
$E\left(\vxi_{\vC}^TA_d\vxi_{\vC}\right) = 0$ for any $\vC$.
To find the variance of $\sqrt{n}\vxi_{\\vC}^TA_d\vxi_{\\vC}$ we first evaluate the conditional second moment 
$E[(\sqrt{n}\vxi_{\\vC}^TA_d\vxi_{\\vC})^2| \vZ^T\vC]$. 
Recalling the notation
$\sigma^2(.,\vz_j^T\vC) = E(\xi_j^2|\vZ^T\vC=\vz_j^T\vC)$, we have
\bqan\label{rel.Pf.Thm1.bounded}
&& \sup_{\vC}\frac{1}{n(p-1)^2}\sum_{i_1,i_2}^{n}\sum_{j_1 \neq l_1}^{n}\sum_{j_2 \neq l_2}^{n}E(\xi_{j_1}\xi_{l_1}\xi_{j_2}\xi_{l_2}|\vZ^T\vC)I(j_s \in W_{i_s}(\vC),l_s \in W_{i_s}(\vC),s=1,2)
\nonumber\\[2mm]
&& = \sup_{\vC}\frac{2}{n(p-1)^2}\sum_{i_1=1}^{n}\sum_{i_2=1}^{n}\sum_{j \neq l}^{n}\sigma^2(.,\vz_{j}^T\vC)\sigma^2(.,\vz_{l}^T\vC)I(j,l \in W_{i_1}(\vC)\cap W_{i_2}(\vC))  \nonumber \\
&& = \sup_{\vC}\frac{2}{n(p-1)^2}\sum_{i_1=1}^{n}\sum_{i_2=1}^{n}\sum_{j \neq l}^{n}\sigma^2(.,\vz_{j}^T\vC)\left(\sigma^2(.,\vz_{j}^T\vC) + O_p\left(\frac{p}{\sqrt{n}}\right)\right)I(j,l \in W_{i_1}(\vC)\cap W_{i_2}(\vC)) \nonumber \\
&& = \sup_{\vC}\frac{2}{n(p-1)^2}\sum_{j=1}^{n}\sigma^4(.,\vz_{j}^T\vC)\sum_{i_1=1}^{n}\sum_{i_2=1}^{n}\sum_{l \neq j}^{n}I(j,l \in W_{i_1}(\vC)\cap W_{i_2}(\vC)) + O_p\left(\frac{p^2}{n^{1/2}}\right) \nonumber \\
&& = \sup_{\vC}\frac{2}{n(p-1)^2}\sum_{j=1}^{n}\sigma^4(.,\vz_{j}^T\vC)2(1+2^2+3^2+...+(p-1)^2) + O_p\left(\frac{p^2}{n^{1/2}}\right) \nonumber \\
&& = \sup_{\vC}\frac{2}{n(p-1)^2}\frac{p(p-1)(2p-1)}{3}\sum_{j=1}^{n}\sigma^4(.,\vz_{j}^T\vC) + O_p\left(\frac{p^2}{n^{1/2}}\right), \nonumber
\eqan
where the third equality follows from Lemma \ref{lemma:Op} using the assumption that $\sigma^2(.,\vz_j^T\vC)$ is Lipschitz continuous and the second last inequality results from the fact that if $1 \leq |j_1-j_2| = s \leq p-1$, then they are $(p-s)^2$ pairs of windows whose intersection includes $j_1$ and $j_2$. Taking limits as $n\to\infty$ it is seen that
 \vspace{-1mm}
\begin{eqnarray}
\sup_{\vC}E\left(n^{1/2}\vxi_{\vC}^TA_d\vxi_{\vC}\right|\vZ^T\vC)^2 &\askonv & \frac{2(2p-1)}{3(p-1)}E(\sigma^4(.,\vz^T\vC)) 
= \frac{2(2p-1)}{3(p-1)}\tau^2.
\end{eqnarray}
From relation (\ref{rel.Pf.Thm1.bounded}) it is easily seen that 
$\sup_{\vC}E[(\sqrt{n}\vxi_{\vC}^TA_d\vxi_{\vC})^2| \vZ^T\vC]$ remains bounded, and thus
$\sup_{\vC}\Var(n^{1/2}\xi_{\vC}^TA\xi_{\vC})$ also converges to the same limit by the Dominated Convergence Theorem.
\end{proof}

\begin{lemma} \label{lemma:Ad} 
If the assumptions of Theorem \ref{thm:main} hold, then under $H_0$ and as n $\rightarrow \infty$,
\bqan\label{lemma:Ad_eqn} 
\sup_{\vC}P\left(n^{1/2}\left|\vxi_{\vC_{\theta}}^TA\vxi_{\vC_{\theta}} - \vxi_{\vC_{\theta}}^TA_d\vxi_{\vC_{\theta}}\right| \geq \epsilon\right) \rightarrow 0,
\eqan
where $A_d = diag\{B_1,...,B_n\}$, with $B_i = \frac{1}{n(p-1)}[\bold{J}_p - \bold{I}_p].$
\end{lemma}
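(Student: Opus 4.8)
The goal is to show that replacing the quadratic‑form matrix $A$ of (\ref{def.matrixA}) by its block‑diagonal truncation $A_d = \mathrm{diag}\{B_1,\dots,B_n\}$ with $B_i = \frac{1}{n(p-1)}[\mathbf{J}_p - \mathbf{I}_p]$ costs nothing asymptotically, uniformly in $\vC$. The natural route is to examine the difference matrix $A - A_d$ term by term against the three pieces of $A$: the direct sum $\frac{np-1}{n(n-1)p(p-1)}\oplus_{i=1}^n \mathbf{J}_p$, the rank‑one piece $-\frac{1}{n(n-1)p}\mathbf{J}_{np}$, and the identity piece $-\frac{1}{n(p-1)}\mathbf{I}_{np}$. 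Observe first that $\oplus_{i=1}^n\mathbf{J}_p$ applied to a vector whose blocks are the augmented cells is, up to the scalar in front, a within‑cell sum‑of‑cross‑products operator; the leading block‑diagonal part of $A$ therefore already matches $A_d$ except that its scalar coefficient $\frac{np-1}{n(n-1)p(p-1)}$ differs from $\frac{1}{n(p-1)}$ (which multiplies $\mathbf{J}_p$ inside $B_i$) by a factor $1 + O(1/n)$, and the $-\mathbf{I}_p$ inside $B_i$ is exactly the diagonal part of the identity piece of $A$ restricted to one block. What is genuinely discarded in passing to $A_d$ is (i) the off‑block entries of $-\frac{1}{n(n-1)p}\mathbf{J}_{np}$, i.e. cross terms $\xi_{j_1}\xi_{j_2}$ with $j_1,j_2$ in cells that do not overlap, and (ii) the $O(1/n)$ relative error in the leading scalar. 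So the proof reduces to controlling a quadratic form in $\vxi_{\vC}$ whose matrix has entries of size $O(1/(n^2 p))$ spread over all $(np)^2$ index pairs, plus a negligible rescaling of the main block‑diagonal term.

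\textbf{Key steps.} First, I would write $n^{1/2}\vxi_\vC^T(A - A_d)\vxi_\vC$ explicitly as a sum of the two contributions above. Second, for the $\mathbf{J}_{np}$ contribution, write it as $-\frac{1}{n^{1/2}(n-1)p}\bigl(\sum_{k} \eta_k\bigr)^2$ up to a centering, where $\eta_k$ runs over all $np$ entries of $\vxi_\vC$ (each original $\xi_j$ appearing in at most $p$ cells); since the $\xi_j$ are mean‑zero with $\sup_{\vx,\vz}\sigma^2<\infty$ and the construction only reindexes them, $\sum_k\eta_k = O_p(p\sqrt n)$, so this term is $O_p(p^2 n^{1/2}/((n-1)p)) = O_p(p^2/n^{1/2}) \to 0$; crucially every bound here is in terms of $\sigma^2$ and $E(\epsilon^4)$ only, hence uniform in $\vC$. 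Third, for the rescaling error, bound $\bigl(\frac{np-1}{n(n-1)p(p-1)} - \frac{1}{n(p-1)}\bigr)$ times the block‑diagonal quadratic form; the scalar gap is $O(1/(n^2(p-1)))$ and the (unscaled) block‑diagonal form $\sum_i \sum_{j_1\neq j_2 \in W_i}\xi_{j_1}\xi_{j_2}$ is $O_p(np)$ in mean square (same computation as in the variance calculation in the proof of Theorem \ref{thm:main}, which showed $\mathrm{Var}(n^{1/2}\vxi_\vC^TA_d\vxi_\vC)$ is $O(1)$), so after multiplying by $n^{1/2}$ this piece is $O_p(n^{1/2}\cdot np/(n^2)) = O_p(p/n^{1/2}) \to 0$, again uniformly. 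Fourth, combine via Markov's inequality to get $\sup_\vC P(n^{1/2}|\vxi_\vC^TA\vxi_\vC - \vxi_\vC^TA_d\vxi_\vC|\ge\epsilon)\to 0$.

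\textbf{Main obstacle.} The routine parts are the moment bounds; the one point requiring care is the uniformity in $\vC$. Because $\vC = \vC_\theta$ is data‑dependent (it depends on the $\hat\xi_i$), one cannot condition it away, so every intermediate bound must be shown to hold with a constant not depending on which coordinates of $\vZ$ are selected and which ordering of the augmented cells results. The saving grace is that $\vC$ enters only through the index sets $W_i(\vC)$, i.e. through a permutation of the $n$ residuals and a fixed combinatorial pattern of overlaps; the number of $(j_1,j_2)$ pairs lying in a common cell is $(p-s)^2$ for $|j_1-j_2| = s$ regardless of $\vC$ (exactly the combinatorial fact used in the proof of Theorem \ref{thm:main}), and the $\xi_j$ moments are bounded by $\sigma^2$ and $E(\epsilon^4)$ uniformly. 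Making this ``the bounds are permutation‑invariant and the moments are uniformly bounded, hence $\sup_\vC$ is harmless'' argument precise — ideally by reducing to a finite (or polynomially many) set of realizable cell configurations, or by a direct worst‑case count — is where I would spend the bulk of the writing, following the analogous treatment in Zambom and Akritas (2012).
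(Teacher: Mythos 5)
Your proposal is correct in substance but follows a different decomposition than the paper's. The paper begins with the observation that the diagonal blocks of $A$ and $A_d$ coincide \emph{exactly}, since $\frac{np-1}{n(n-1)p(p-1)}-\frac{1}{n(n-1)p}=\frac{1}{n(p-1)}$, so $A-A_d$ consists solely of the off-diagonal blocks, every entry equal to $\frac{1}{n(n-1)p}$ in absolute value; it then applies Chebyshev to the second moment of $n^{1/2}\vxi_{\vC}^T(A-A_d)\vxi_{\vC}$ conditionally on $\vZ$, counting the four-index terms with non-vanishing conditional expectation ($O(n^2p^4)$ two-pair terms and $O(np^4)$ all-equal terms) to get a bound of order $p^2/n$ that is free of $\vC$. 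You instead bound the two discrepancy pieces separately: the full $-\frac{1}{n(n-1)p}\mathbf{J}_{np}$ term, handled via the rank-one identity $\vxi_{\vC}^T\mathbf{J}_{np}\vxi_{\vC}=(\sum_j c_j\xi_j)^2$ with multiplicities $c_j\le p$, and the coefficient-gap term on the diagonal blocks, handled by the variance computation already available for the block-diagonal form. That is a legitimate and arguably more elementary route: the rank-one trick replaces the paper's four-fold index counting by a one-line conditional variance bound, and it makes uniformity in $\vC$ transparent because the multiplicities and window sizes are the same for every $\vC$. What the paper's route buys is brevity: after noting the exact cancellation on the diagonal blocks there is only one piece left to bound.

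Two points to tighten. First, your opening accounting omits the within-block part of $-\frac{1}{n(n-1)p}\mathbf{J}_{np}$; it in fact cancels the scalar gap $\frac{np-1}{n(n-1)p(p-1)}-\frac{1}{n(p-1)}=\frac{1}{n(n-1)p}$ exactly (this is the paper's starting point). Your key steps do treat the full $\mathbf{J}_{np}$ piece, so the decomposition adds up, but say so explicitly, and note that the scalar-gap piece carries the full $\oplus_{i}\mathbf{J}_p$ including its diagonal (that extra diagonal contribution is $O_p(n^{-1/2})$ after scaling, hence harmless). Second, carry out the moment bounds conditionally on $\vZ$ (equivalently $\vZ^T\vC$), as the paper does: for fixed $\vC$ the windows are then deterministic while the $\xi_j$ stay conditionally centered with conditional variance bounded by $\sup_{\vx,\vz}\sigma^2(\vx,\vz)$, which is precisely what turns ``the bounds depend on $\vC$ only through $\vC$-free counts'' into a rigorous statement of uniformity; no reduction to finitely many cell configurations is needed, so the obstacle you flag is lighter than you fear. (Your orders are slightly loose, e.g. the $\mathbf{J}_{np}$ term is $O_p(p/n^{1/2})$ rather than $O_p(p^2/n^{1/2})$, but with $p$ fixed both vanish, so the conclusion stands.)
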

\begin{proof}
By Chebyshev Inequality, we have that
\bqan
\sup_{\vC}P\left(n^{1/2}\left|\vxi_{\vC}^TA\vxi_{\vC} - \vxi_{\vC}^TA_d\vxi_{\vC}\right| \geq \epsilon\right) \leq \sup_{\vC}\frac{nE\left[\left(\vxi_{\vC}^TA\vxi_{\vC} - \vxi_{\vC}^TA_d\vxi_{\vC}\right)^2\right]}{\epsilon^2} \label{Cheby}
\eqan

Since the block diagonal elements of $A_d$ equal those of $A$, it sufices to show that the off diagonal blocks of $A$ are negligible. For $i_1 \neq i_2$, every element of the block $(i_1, i_2)$ equals $\frac{1}{n(n-1)p}$. We will show that the second moment on the right hand side of (\ref{Cheby}) conditionally on $\vZ$ goes to zero, and therefore the unconditional second moment also does. To that end, write
\bqan \label{lemma:Ad_eqn}
&&\sup_{\vC}\frac{nE\left[\left(\vxi_{\vC}^TA\vxi_{\vC} - \vxi_{\vC}^TA_d\vxi_{\vC}\right)^2|\vZ\right]}{\epsilon^2}\\
&&=n\left(\frac{1}{n(n-1)p}\right)^2\sup_{\vC}E\left(\sum_{i_1\neq i_2}\sum_{i_3\neq i_4}\sum_{j_1,j_2,j_3,j_4=1}^n\xi_{j_1}\xi_{j_2}\xi_{j_3}\xi_{j_4}I(j_k \in W_{i_k}(\vC), k=1,\ldots,4)|\vZ\right) \nonumber \\
&&=n\left(\frac{1}{n(n-1)p}\right)^2\sup_{\vC}\sum_{i_1\neq i_2}\sum_{i_3\neq i_4}\sum_{j_1,j_2,j_3,j_4=1}^nE\left(\xi_{j_1}\xi_{j_2}\xi_{j_3}\xi_{j_4}|\vZ\right)I(j_k \in W_{i_k}(\vC), k=1,\ldots,4) \nonumber 
\eqan
The expected value in this sum is different from zero, only if $\xi_{j_1} , \ldots , \xi_{j_4}$ consists of two pairs of equal observations, or $j_1 = j_2 = j_3 = j_4$. Since there are $O(n^2p^4)$ terms for the former case to happen and $O(n p^4)$ for the latter case to happen, and the magnitude of these terms is not affected by $\vC$, the 
order of (\ref{lemma:Ad_eqn}) is $O\left(\frac{n}{p}\frac{1}{n^4p^2}n^2p^4\right) = o(1)$, and this completes the proof. 
\end{proof}

\begin{corol} \label{corol:normal} Let $A_d = diag\{B_1,...,B_n\}$, with $B_i = \frac{1}{n(p-1)}[\bold{J}_p - \bold{I}_p]$, $\mbox{sd} = \sqrt{\frac{2p(2p-1)}{3(p-1)}\tau^2}$, and $\vxi_\vC$ be defined in (\ref{hat.vxi}) with $\vC$ instead of $\vC_{\theta}$. Then, under the assumptions of Theorem \ref{thm:main}  we have
\bqa
&&\sup_{\vC}\sup_t\left|P\left(\frac{n^{1/2}\vxi_{\vC}^TA\vxi_{\vC}}{sd} \leq t\right) - \Phi(t)\right| \rightarrow 0 \mbox{  if and only if}\\
&&\sup_{\vC}\sup_t\left|P\left(\frac{n^{1/2}\vxi_{\vC}^TA_d\vxi_{\vC}}{sd} \leq t\right) - \Phi(t)\right| \rightarrow 0.
\eqa
\end{corol}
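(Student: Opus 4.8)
The plan is to derive the corollary as an immediate consequence of Lemma \ref{lemma:Ad}. That lemma already establishes that $n^{1/2}(\vxi_{\vC}^TA\vxi_{\vC} - \vxi_{\vC}^TA_d\vxi_{\vC})$ converges to zero in probability, \emph{uniformly} over $\vC$; dividing by the fixed positive constant $\mbox{sd}$, this says that $\rho_n(\epsilon) := \sup_{\vC}P\big(|T_n(\vC) - T_n^d(\vC)| \ge \epsilon\big) \to 0$ for every $\epsilon>0$, where $T_n(\vC) = n^{1/2}\vxi_{\vC}^TA\vxi_{\vC}/\mbox{sd}$ and $T_n^d(\vC) = n^{1/2}\vxi_{\vC}^TA_d\vxi_{\vC}/\mbox{sd}$. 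The corollary then amounts to saying that one of the two families $\{T_n(\vC)\}$, $\{T_n^d(\vC)\}$ is asymptotically standard normal uniformly in $\vC$ and $t$ if and only if the other is, and this is a standard ``Slutsky with uniformity'' argument.

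Concretely, for the forward implication fix $\epsilon>0$. Writing $T_n = T_n^d + (T_n - T_n^d)$, one has the elementary event inclusions $\{T_n(\vC)\le t\} \subseteq \{T_n^d(\vC)\le t+\epsilon\}\cup\{|T_n(\vC)-T_n^d(\vC)|\ge\epsilon\}$ and $\{T_n^d(\vC)\le t-\epsilon\}\subseteq\{T_n(\vC)\le t\}\cup\{|T_n(\vC)-T_n^d(\vC)|\ge\epsilon\}$, which together give
\[
P\big(T_n^d(\vC)\le t-\epsilon\big) - \rho_n(\epsilon) \ \le\ P\big(T_n(\vC)\le t\big) \ \le\ P\big(T_n^d(\vC)\le t+\epsilon\big) + \rho_n(\epsilon).
\]
Assuming $\delta_n^d := \sup_{\vC}\sup_t\big|P(T_n^d(\vC)\le t)-\Phi(t)\big| \to 0$, we replace $P(T_n^d(\vC)\le t\pm\epsilon)$ by $\Phi(t\pm\epsilon)\pm\delta_n^d$ and use the Lipschitz bound $|\Phi(t\pm\epsilon)-\Phi(t)|\le \epsilon/\sqrt{2\pi}$ to absorb the shift. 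Since none of $\rho_n(\epsilon)$, $\delta_n^d$, $\epsilon/\sqrt{2\pi}$ depends on $\vC$ or $t$, this yields $\sup_{\vC}\sup_t|P(T_n(\vC)\le t)-\Phi(t)| \le \delta_n^d + \rho_n(\epsilon) + \epsilon/\sqrt{2\pi}$; letting $n\to\infty$ and then $\epsilon\downarrow 0$ proves this direction. The reverse implication is identical after interchanging the roles of $A$ and $A_d$, since $T_n^d - T_n$ obeys the same uniform tail bound.

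I do not expect a genuine obstacle in the corollary itself: the real work is concentrated in Lemma \ref{lemma:Ad} (replacing $A$ by its block-diagonal part $A_d$). The only two points that require a little care here are (i) carrying the uniformity over $\vC$ through every estimate, which is automatic because Lemma \ref{lemma:Ad} is stated with $\sup_{\vC}$ and $\mbox{sd}$ is a constant not depending on $\vC$; and (ii) converting convergence in distribution into uniform-in-$t$ closeness of the cdf's, which works precisely because the limit $\Phi$ is (Lipschitz) continuous, so the $\pm\epsilon$ perturbations introduced by passing between $T_n$ and $T_n^d$ can be controlled uniformly in $t$.
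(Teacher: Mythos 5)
Your proposal is correct and follows essentially the same route as the paper's proof: both reduce the equivalence to Lemma \ref{lemma:Ad} via a Slutsky-type argument with uniformity in $\vC$, sandwiching $P(T_n(\vC)\le t)$ between $\epsilon$-shifted versions of $P(T_n^d(\vC)\le t)$ plus the uniform tail probability, and then using the (uniform) continuity of $\Phi$ before letting $n\to\infty$ and $\epsilon\downarrow 0$, with the reverse direction by symmetry. The only cosmetic difference is that you make the two-sided event inclusions and the Lipschitz constant of $\Phi$ explicit, where the paper bounds by a maximum of the two shifted probabilities and $\sup_t|\Phi(t)-\Phi(t+\epsilon)|$.
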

\begin{proof}
Write 
\bqa
\frac{n^{1/2}\vxi_{\vC}^TA\vxi_{\vC}}{sd} = \frac{n^{1/2}\vxi_{\vC}^TA_d\vxi_{\vC}}{sd} + \frac{n^{1/2}\left(\vxi_{\vC}^TA\vxi_{\vC}- \vxi_{\vC}^TA_d\vxi_{\vC}\right)}{sd}.
\eqa
Now, for any $t$
\bqan
&&\hspace{-9mm}\sup_{\vC}\left|P\left(\frac{n^{1/2}\vxi_{\vC}^TA\vxi_{\vC}}{sd} \leq t\right) - \Phi(t)\right|\nonumber\\
&&\hspace{-9mm}=\sup_{\vC}\Bigg|P\left(\frac{n^{1/2}\vxi_{\vC}^TA_d\vxi_{\vC}}{sd} \leq t - \frac{n^{1/2}\left(\vxi_{\vC}^TA\vxi_{\vC}- \vxi_{\vC}^TA_d\vxi_{\vC}\right)}{sd}, \left|\frac{n^{1/2}\left(\vxi_{\vC}^TA\vxi_{\vC}- \vxi_{\vC}^TA_d\vxi_{\vC}\right)}{sd}\right| \leq \epsilon\right)\nonumber\\
&&\hspace{-9mm} + P\left(\frac{n^{1/2}\vxi_{\vC}^TA_d\vxi_{\vC}}{sd} \leq t - \frac{n^{1/2}\left(\vxi_{\vC}^TA\vxi_{\vC}- \vxi_{\vC}^TA_d\vxi_{\vC}\right)}{sd}, \left|\frac{n^{1/2}\left(\vxi_{\vC}^TA\vxi_{\vC}- \vxi_{\vC}^TA_d\vxi_{\vC}\right)}{sd}\right| \geq \epsilon\right) - \Phi(t)\Bigg|\nonumber\\
&& \hspace{-9mm} \leq \sup_{\vC}\max\Bigg\{\Bigg|P\left(\frac{n^{1/2}\vxi_{\vC}^TA_d\vxi_{\vC}}{sd} \leq t + \epsilon \right)  - \Phi(t)\Bigg|, \Bigg|P\left(\frac{n^{1/2}\vxi_{\vC}^TA_d\vxi_{\vC}}{sd} \leq t - \epsilon \right)  - \Phi(t)\Bigg|\Bigg\}\nonumber\\
&& + \sup_{\vC}P\left( \left|\frac{n^{1/2}\left(\vxi_{\vC}^TA_d\vxi_{\vC}- \vxi_{\vC}^TA\vxi_{\vC}\right)}{sd}\right| \geq \epsilon\right). \label{to_normal}
\eqan
The last term in (\ref{to_normal}) goes to zero  by Lemma \ref{lemma:Ad}. Thus,
\bqa
&&\hspace{-9mm}\sup_{\vC}\sup_t\left|P\left(\frac{n^{1/2}\vxi_{\vC}^TA\vxi_{\vC}}{sd} \leq t\right) - \Phi(t)\right| \nonumber\\
&& \hspace{-9mm} \leq \sup_{\vC}\max\Bigg\{\sup_t\Bigg|P\left(\frac{n^{1/2}\vxi_{\vC}^TA_d\vxi_{\vC}}{sd} \leq t \right)  - \Phi(t- \epsilon)\Bigg|, \sup_t\Bigg|P\left(\frac{n^{1/2}\vxi_{\vC}^TA_d\vxi_{\vC}}{sd} \leq t  \right)  - \Phi(t+ \epsilon)\Bigg|\Bigg\}\\
&& \hspace{5mm} + o(1)\nonumber\\
&& \hspace{-9mm} \leq \sup_{\vC}\sup_t\Bigg|P\left(\frac{n^{1/2}\vxi_{\vC}^TA_d\vxi_{\vC}}{sd} \leq t \right)  - \Phi(t)\Bigg| + \sup_{t}|\Phi(t) - \Phi(t+\epsilon)| + o(1).\nonumber
\eqa
Letting $\epsilon \to 0$, 
\bqa
 \lim_{n\to \infty}\sup_{\vC}\sup_t\Bigg|P\left(\frac{n^{1/2}\vxi_{\vC}^TA\vxi_{\vC}}{sd} \leq t \right)  - \Phi(t)\Bigg| \leq \lim_{n\to \infty}\sup_{\vC}\sup_t\Bigg|P\left(\frac{n^{1/2}\vxi_{\vC}^TA_d\vxi_{\vC}}{sd} \leq t \right)  - \Phi(t)\Bigg|.
\eqa
Using similar steps, it can be shown that
\bqa
 \lim_{n\to \infty}\sup_{\vC}\sup_t\Bigg|P\left(\frac{n^{1/2}\vxi_{\vC}^TA_d\vxi_{\vC}}{sd} \leq t \right)  - \Phi(t)\Bigg| \leq \lim_{n\to \infty}\sup_{\vC}\sup_t\Bigg|P\left(\frac{n^{1/2}\vxi_{\vC}^TA\vxi_{\vC}}{sd} \leq t \right)  - \Phi(t)\Bigg|,
\eqa
completing the proof.
\end{proof}

\begin{lemma} \label{lemma:Sv} 
Let $S_V(\vC)$ be defined as in (\ref{SuSv}). Under the assumptions of Theorem \ref{thm:main},
\bqa
\sup_{\vC}P\left(\left|\frac{S_V(\vC)}{\sqrt{n}\mbox{ sd}}\right| \geq \epsilon \right) \rightarrow 0
\eqa
\end{lemma}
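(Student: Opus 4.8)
The plan is to control this tail probability by a second-moment (Chebyshev) argument, so that the whole lemma reduces to an $o(n)$ bound on $\sup_{\vC}\Var(S_V(\vC))$. First I would observe that $S_V(\vC)$ is centered: conditioning on the covariates $\{(\vX_k,\vZ_k)\}_{k=1}^n$ — which is all that the windows $W_i(\vC)$ depend on — independence of the errors from the covariates and from each other gives $E(\xi_{j_1}\xi_{j_2}\mid\{(\vX_k,\vZ_k)\})=0$ for $j_1\ne j_2$, hence $E(\gamma_\ell(\vC))=0$ and $E(S_V(\vC))=0$ for every $\vC$. Chebyshev's inequality then yields
\bqa
\sup_{\vC}P\!\left(\left|\frac{S_V(\vC)}{\sqrt n\,\mbox{sd}}\right|\ge\epsilon\right) &\le& \frac{1}{n\,\epsilon^2\,\mbox{sd}^2}\,\sup_{\vC}\Var\bigl(S_V(\vC)\bigr),
\eqa
and since $\mbox{sd}^2$ is a fixed positive constant it suffices to prove $\sup_{\vC}\Var(S_V(\vC))=o(n)$.

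Next I would invoke the independence of the $V_i(\vC)$, noted right after their definition, to write $\Var(S_V(\vC))=\sum_{i=1}^{r_n}\Var(V_i(\vC))=\sum_{i=1}^{r_n}E(V_i(\vC)^2)$, the last equality because each $V_i(\vC)$ has mean zero. Each $V_i(\vC)$ is a sum of $p$ consecutive terms $\gamma_\ell(\vC)$ and $p$ is a fixed odd integer; moreover, consecutive $V$-blocks are separated by the $U$-blocks of width $b_n\sim n^{2/3}\gg p$, so the windows entering distinct $V_i$ are disjoint, which is what makes the $V_i$ independent in the first place. To bound a single $E(V_i(\vC)^2)$ I would condition on the covariates and expand $E(\gamma_{\ell_1}(\vC)\gamma_{\ell_2}(\vC)\mid\text{covariates})$ exactly as in the variance computation in the proof of Theorem \ref{thm:main}: the relevant fourth moment of the errors survives only when the four indices pair up, the surviving sum equals $\frac{2}{(p-1)^2}\sum_{j\ne l}\sigma^2(\vX_j,\vZ_j)\sigma^2(\vX_l,\vZ_l)\,I\bigl(j,l\in W_{\ell_1}(\vC)\cap W_{\ell_2}(\vC)\bigr)$, and this is $O(1)$ because $\sup_{\vx,\vz}\sigma^2(\vx,\vz)<\infty$ and each window contains exactly $p$ indices. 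Summing the at most $p^2$ pairs $(\ell_1,\ell_2)$ inside a $V$-block gives $E(V_i(\vC)^2)=O(1)$, with an $O(\cdot)$ constant depending only on $p$, $\sup\sigma^2$ and $E(\epsilon^4)$, hence uniform in $\vC$ — the supervised direction $\vC_\theta$ only relabels which residual indices land in which length-$p$ window.

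Combining these, $\sup_{\vC}\Var(S_V(\vC))=\sum_{i=1}^{r_n}O(1)=O(r_n)=O(n^{1/3})$, so the Chebyshev bound above is $O(n^{-2/3})\to 0$ uniformly in $\vC$, which is the assertion. I expect the only genuinely delicate point to be precisely this uniformity in $\vC$: one must verify that although the ordering of the residuals and the windows $W_i(\vC)$ really do depend on $\vC_\theta$, every moment estimate feeding the variance bound is controlled by quantities ($p$, $\sup\sigma^2$, $E(\epsilon^4)$, and the combinatorics of length-$p$ windows) that do not. All the remaining steps are routine second- and fourth-moment bookkeeping of the errors, entirely parallel to the computation already done for Theorem \ref{thm:main}, together with the simple structural fact that $S_V$ aggregates only $O(r_n)=O(n^{1/3})$ of the $\gamma_\ell$ terms whereas the normalization is of order $\sqrt n$.
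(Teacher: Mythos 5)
Your proof is correct, and it gets to the conclusion by a somewhat different, and in fact more elementary, moment argument than the paper. The paper never computes $\Var(S_V(\vC))$: it first applies a union bound, $P\left(n^{-1/2}\left|\sum_{i=1}^{r_n}V_i(\vC)\right|\geq \epsilon\right) \leq \sum_{i=1}^{r_n}P\left(|V_i(\vC)|\geq \epsilon n^{1/2}r_n^{-1}\right)$, and then bounds each summand by Markov's inequality with a \emph{fourth} moment, using $E\left(V_i(\vC)^4\right)\leq K p^4$ uniformly in $\vC$, which yields the bound $K\epsilon^{-4}n^{-2}r_n^{5}p^4=O(n^{-1/3})$. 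You instead apply Chebyshev to the whole sum and use the block independence (asserted in the paper right after the definition of $U_i,V_i$, and which you correctly trace to the disjointness of the residual indices in distinct $V$-blocks) together with $E(V_i(\vC))=0$ and $E\left(V_i(\vC)^2\right)=O(1)$ uniformly in $\vC$ --- a second-moment computation that parallels the variance calculation already carried out in the proof of Theorem \ref{thm:main} --- obtaining $\sup_{\vC}\Var(S_V(\vC))=O(r_n)$ and hence a bound of order $r_n/n=O(n^{-2/3})$. Your route needs only second moments of the residuals within each block (so $E(\epsilon^4)<\infty$ is not actually needed at this step, though it is available), avoids the per-block fourth-moment estimate and the union bound, and gives a slightly sharper rate; the paper's route trades the variance additivity of the blocks for the fourth-moment bound. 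Both arguments hinge on the same structural facts, which you identify explicitly: the $V$-blocks involve disjoint index sets, and every moment estimate depends on $\vC$ only through $p$, $\sup_{\vx,\vz}\sigma^2(\vx,\vz)$, the error moments, and the fact that each window contains $p$ indices, which is what delivers the uniformity in $\vC$.
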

\begin{proof}
For any $\epsilon > 0$, since $V_i(\vC)$ are independent,
\bqa
&& \sup_{\vC}P\left(n^{-1/2}|\sum_{i=1}^{r_n}V_i(\vC)| \geq \epsilon\right) \leq \sup_{\vC}\sum_{i=1}^{r_n}P\left(|V_i(\vC)| \geq \epsilon n^{1/2}r_n^{-1}\right) \\
&& \leq \sup_{\vC} \sum_{i=1}^{r_n}\frac{E(V_i(\vC)^4)}{\epsilon^4n^2r_n^{-4}}  \leq K\epsilon^{-4}n^{-2}r_n^5(p^2)^2 = o(1),
\eqa
where the last inequality follows from the fact that $E(V_i^4(\vC)) \leq K(p^2)^2$.
\end{proof}

\begin{lemma} \label{lemma:Su} 
Let $S_U(\vC)$ and $S_V(\vC)$ be defined as in (\ref{SuSv}). Under the assumptions of Theorem \ref{thm:main},
\bqan  \label{eqn:Su} 
\sup_{\vC}\left|P\left(\frac{S_U(\vC)}{\sqrt{n}\mbox{sd}} \leq t - \frac{S_V(\vC)}{\sqrt{n}\mbox{sd}}, \left|\frac{S_V(\vC)}{\sqrt{n}\mbox{sd}}\right| \leq \epsilon \right) - \Phi\left(t\right)\right| \rightarrow 0.
\eqan
\end{lemma}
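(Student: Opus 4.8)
The plan is to strip away the truncation $\{|S_V(\vC)/(\sqrt n\,\mbox{sd})|\le\epsilon\}$ together with the term $S_V(\vC)$, reducing (\ref{eqn:Su}) to a central limit theorem for the big-block sum $S_U(\vC)$ that holds uniformly in $\vC$. First I would note that the probability appearing in (\ref{eqn:Su}) equals
\[
P\!\left(\frac{S_U(\vC)+S_V(\vC)}{\sqrt n\,\mbox{sd}}\le t\right)-R_n(\vC),\qquad 0\le R_n(\vC)\le P\!\left(\left|\frac{S_V(\vC)}{\sqrt n\,\mbox{sd}}\right|>\epsilon\right),
\]
and the bound on $R_n(\vC)$ tends to $0$ uniformly in $\vC$ by Lemma \ref{lemma:Sv}. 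Since by (\ref{SuSv}) one has $(S_U(\vC)+S_V(\vC))/\sqrt n=S_U(\vC)/\sqrt n+S_V(\vC)/\sqrt n$ with $S_V(\vC)/(\sqrt n\,\mbox{sd})\pkonv0$ uniformly, repeating the Slutsky-type sandwich used in the proof of Corollary \ref{corol:normal} reduces the lemma to showing
\[
\sup_{\vC}\sup_t\left|P\!\left(\frac{S_U(\vC)}{\sqrt n\,\mbox{sd}}\le t\right)-\Phi(t)\right|\to0;
\]
note that $\Var(S_U(\vC)/\sqrt n)\to\mbox{sd}^2$ uniformly is already available from (\ref{Var_Sv}).

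For this last display I would condition on $\vZ_1^T\vC,\dots,\vZ_n^T\vC$, which fixes every window $W_i(\vC)$ (these depend on $\vC$ only through the empirical ranks of the $\vZ_j^T\vC$) and renders the $\xi_j$ conditionally independent with mean $0$, $E(\xi_j^2\mid\vZ^T\vC)=\sigma^2(\cdot,\vz_j^T\vC)\le\sup_{\vx,\vz}\sigma^2(\vx,\vz)$ and $E(|\xi_j|^3\mid\vZ^T\vC)\le(\sup_{\vx,\vz}\sigma^2(\vx,\vz))^{3/2}E|\epsilon|^3<\infty$. The blocks $U_1(\vC),\dots,U_{r_n}(\vC)$ are then conditionally independent and mean zero, $\sum_i\Var(U_i(\vC)\mid\vZ^T\vC)=\Var(S_U(\vC)\mid\vZ^T\vC)$ with $\Var(S_U(\vC)\mid\vZ^T\vC)/n\askonv\mbox{sd}^2$ uniformly in $\vC$ (as in the proof of Theorem \ref{thm:main}), and I would verify Lyapunov's condition via the uniform estimate $E(|U_i(\vC)|^3\mid\vZ^T\vC)\le C\,b_n^{3/2}$. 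This estimate follows because each $\gamma_j(\vC)$ is a fixed quadratic form in the at most $p$ errors indexed by $W_i(\vC)$, so $E(|\gamma_j(\vC)|^3\mid\vZ^T\vC)$ is bounded by a constant depending only on $p$, $\sup_{\vx,\vz}\sigma^2(\vx,\vz)$ and $E|\epsilon|^3$; writing $U_i(\vC)$ as a sum of $p$ sub-sums, each over every $p$-th index of the block so that its windows are pairwise disjoint and its $\gamma_j(\vC)$ conditionally independent, the Marcinkiewicz--Zygmund inequality and the power-mean bound $(\sum_{k=1}^m a_k)^{3/2}\le m^{1/2}\sum_{k=1}^m a_k^{3/2}$ deliver the order $b_n^{3/2}$. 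With $r_n\sim n^{1/3}$, $b_n\sim n^{2/3}$, and $\Var(S_U(\vC)\mid\vZ^T\vC)\ge\mbox{sd}^2 n/2$ for large $n$ uniformly in $\vC$, the Lyapunov ratio satisfies
\[
\frac{\sum_{i=1}^{r_n}E(|U_i(\vC)|^3\mid\vZ^T\vC)}{\big(\Var(S_U(\vC)\mid\vZ^T\vC)\big)^{3/2}}\le\frac{C\,r_n\,b_n^{3/2}}{(\mbox{sd}^2 n/2)^{3/2}}=O(n^{-1/6})
\]
uniformly in $\vC$ and in the conditioning value. The Berry--Esseen bound for sums of independent summands then gives $\sup_t|P(S_U(\vC)/(\sqrt n\,\mbox{sd})\le t\mid\vZ^T\vC)-\Phi(t)|=O(n^{-1/6})$ uniformly; since this bound is deterministic, taking the expectation over $\vZ^T\vC$ yields the displayed uniform CLT, and together with the first paragraph this proves the lemma.

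The step I expect to be the main obstacle is the uniform third-moment bound on the blocks $U_i(\vC)$: each block is a quadratic form in the errors, so a brute-force moment computation would require moments of $\epsilon$ of order higher than the assumed fourth, and it is the splitting of a block into $p$ independent sub-sums that both keeps the argument at the level of $E|\epsilon|^3$ and makes every constant either combinatorial (a function of $p$) or a function of $\sup_{\vx,\vz}\sigma^2(\vx,\vz)$ and of moments of $\epsilon$ --- none depending on $\vC$. That last feature is also what secures the uniformity in the data-dependent direction $\vC$: conditionally on the ranks of $\{\vZ_j^T\vC\}_j$ the problem reduces to a fixed high-dimensional one-way ANOVA, and the moment estimates in the proof of Theorem \ref{thm:main} are already uniform in $\vC$. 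The remaining steps --- the algebraic identity of the first paragraph, the Slutsky sandwich, and passing the conditional limit through the expectation --- are routine and I would not spell them out.
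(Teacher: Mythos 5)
Your proposal is correct and follows essentially the same route as the paper: dispose of $S_V(\vC)$ and the truncation via Lemma \ref{lemma:Sv} together with a Slutsky-type sandwich and continuity of $\Phi$ (the paper does this with the substitution $t^{*}=t-S_V(\vC)/(\sqrt{n}\,\mbox{sd})$ rather than your exact set identity, a purely cosmetic difference), and then apply a Berry--Esseen/Lyapunov bound to the independent big blocks $U_i(\vC)$, using the variance convergence already established in the proof of Theorem \ref{thm:main}, to get a bound of order $r_n^{-1/2}=n^{-1/6}$ uniformly in $\vC$. Your conditioning on $\vZ^T\vC$ and the explicit third-moment bound $E(|U_i(\vC)|^3\mid\vZ^T\vC)\le C\,b_n^{3/2}$ (via splitting each block into $p$ conditionally independent sub-sums) merely supply details that the paper asserts without proof, and are consistent with its stated order of the Berry--Esseen ratio.
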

\begin{proof}
Note that, using the Berry Essen bound (see Shorack (Probability for Statisticians)), and the fact that $Var(\frac{S_U(\vC)}{\sqrt{n}}) \to \mbox{sd}^2$ as shown in the proof of Theorem \ref{thm:main}, we have
\bqan \label{berry_essen_bound}
\sup_{\vC}\sup_{t}\left|P\left(\frac{S_U(\vC)}{\sqrt{n}\mbox{sd}} \leq t \right) - \Phi\left(t\right)\right| \leq  9\sup_{\vC}\frac{\sum_{i=1}^{r_n}E\left|U_i(\vC)\right|^3}{\left[\sum_{i=1}^{r_n}\mbox{Var}(U_i(\vC))\right]^{3/2}} = O\left(\frac{1}{\sqrt{r_n}}\right) = o(1).
\eqan
Let $t^* = t - \frac{S_V(\vC)}{\sqrt{n}\mbox{sd}}$, then
\bqan 
&& \sup_{\vC}\left|P\left(\frac{S_U(\vC)}{\sqrt{n}\mbox{sd}} \leq t - \frac{S_V(\vC)}{\sqrt{n}\mbox{sd}}, \left|\frac{S_V(\vC)}{\sqrt{n}\mbox{sd}}\right| \leq \epsilon \right) - \Phi\left(t\right)\right| \nonumber\\
&=& \sup_{\vC}\Bigg|P\left(\frac{S_U(\vC)}{\sqrt{n}\mbox{sd}} \leq t^*, \left|\frac{S_V(\vC)}{\sqrt{n}\mbox{sd}}\right| \leq \epsilon \right) - P\left(\frac{S_U(\vC)}{\sqrt{n}\mbox{sd}} \leq t^* \right) \nonumber\\
&&\ \ \ \ \ \ \ \ \ \ \ \ + P\left(\frac{S_U(\vC)}{\sqrt{n}\mbox{sd}} \leq t^* \right) - \Phi(t^*) + \Phi(t^*) - \Phi(t)\Bigg| \nonumber\\
& \leq & \sup_{\vC}\Bigg|P\left(\frac{S_U(\vC)}{\sqrt{n}\mbox{sd}} \leq t^*, \left|\frac{S_V(\vC)}{\sqrt{n}\mbox{sd}}\right| \leq \epsilon \right) - P\left(\frac{S_U(\vC)}{\sqrt{n}\mbox{sd}} \leq t^* \right)\Bigg| \nonumber\\
&&\ \ \ \ \ \ \ \ \ \ \ \ + \sup_{\vC}\Bigg|P\left(\frac{S_U(\vC)}{\sqrt{n}\mbox{sd}} \leq t^* \right) - \Phi(t^*)\Bigg| + \Bigg|\Phi(t^*) - \Phi(t)\Bigg|.\label{Su_sum}
\eqan
The first term in (\ref{Su_sum}) goes to zero by continuity of measures, since by Lemma \ref{lemma:Sv} $P\left(\left|\frac{S_V(\vC)}{\sqrt{n}\mbox{sd}}\right| \leq \epsilon\right) \to 1$. The second term in (\ref{Su_sum}) goes to zero by (\ref{berry_essen_bound}), and the third term goes to zero by the continuity of $\Phi(.)$.

\end{proof}

\begin{lemma} \label{lemma:Fn} 
Let $X_1,\ldots,X_n$ be iid[$F$], and let $\hat{F}_n(x)$ be the corresponding empirical distribution function. Then, for any constant $c$, $$sup_{x_i,x_j}\left\{|F(x_i) - F(x_j)|I\left[|\hat{F}(x_i) - \hat{F}(x_j)| \leq \frac{c}{n}\right]\right\} = O_p\left(\frac{1}{\sqrt{n}}\right).$$
\end{lemma}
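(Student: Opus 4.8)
The plan is to reduce the statement to the Dvoretzky--Kiefer--Wolfowitz (DKW) inequality combined with a triangle inequality. First I would recall that, since $X_1,\ldots,X_n$ are iid with distribution function $F$, the DKW inequality gives $P(\sup_x|\hat F_n(x)-F(x)|>\lambda)\le 2e^{-2n\lambda^2}$ for every $\lambda>0$, so that $\sqrt{n}\sup_x|\hat F_n(x)-F(x)|$ is tight and hence $D_n:=\sup_x|\hat F_n(x)-F(x)|=O_p(n^{-1/2})$. (If one prefers to avoid DKW, the same fact follows from Donsker's theorem, which yields tightness of $\sqrt{n}(\hat F_n-F)$ in the uniform norm.)

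Next, fix any pair $x_i,x_j$ for which the indicator is nonzero, i.e.\ $|\hat F(x_i)-\hat F(x_j)|\le c/n$. By the triangle inequality,
$$|F(x_i)-F(x_j)|\le |F(x_i)-\hat F(x_i)| + |\hat F(x_i)-\hat F(x_j)| + |\hat F(x_j)-F(x_j)| \le 2D_n + \frac{c}{n}.$$
Since this bound is free of $x_i,x_j$, taking the supremum over all pairs for which the indicator equals one gives
$$\sup_{x_i,x_j}\left\{|F(x_i)-F(x_j)|\,I\!\left[|\hat F(x_i)-\hat F(x_j)|\le\frac{c}{n}\right]\right\}\le 2D_n+\frac{c}{n}.$$
Because $D_n=O_p(n^{-1/2})$ and $c/n=o(n^{-1/2})$, the right-hand side is $O_p(n^{-1/2})$, which is the claimed conclusion.

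There is essentially no serious obstacle here; the only point that needs a moment's care is that the DKW (or Donsker) bound is uniform in $x$, so the resulting estimate on $|F(x_i)-F(x_j)|$ holds simultaneously for all admissible pairs and therefore survives the supremum. The role of this lemma in the paper is to control the error incurred by replacing the true quantile spacing of the supervised principal component values by their empirical counterpart when forming the windows $W_i(\vC_\theta)$, and the above is exactly the rate invoked (through Lemma \ref{lemma:Op}) in the variance computation in the proof of Theorem \ref{thm:main}.
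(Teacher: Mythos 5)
Your proposal is correct and follows essentially the same route as the paper's own proof: the Dvoretzky--Kiefer--Wolfowitz bound gives the uniform $O_p(n^{-1/2})$ control of $\sup_x|\hat F_n(x)-F(x)|$, and the triangle inequality splits $|F(x_i)-F(x_j)|$ into two such empirical-process terms plus the $O(1/n)$ empirical spacing term enforced by the indicator. No gaps; nothing further is needed.
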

\begin{proof}  By the Dvoretzky, Kiefer and Wolfowitz (1956) theorem, we have that 
$\forall \epsilon \geq 0,$
$$
 P\left(\sup_{x}|\hat{F}_n(x) - F(x)| \geq \epsilon\right) \leq Ce^{-2n\epsilon^2}.
$$
Therefore, $|\hat{F}(x) - F(x)| = O_p\left(\frac{1}{\sqrt{n}}\right)$ uniformly on $x$. Hence, writing
\begin{eqnarray}
|F(x_i) - F(x_j)| &=& |F(x_i) -\hat{F}_n(x_i) + \hat{F}_n(x_i)- F(x_j) + \hat{F}_n(x_j) - \hat{F}_n(x_j)|, \nonumber 
\end{eqnarray}
it follows that $sup_{x_i,x_j}\left\{|F(x_i) - F(x_j)|I\left[|\hat{F}(x_i) - \hat{F}(x_j)| \leq c/n\right]\right\}$ is less than or equal to
\bqa
 &  & sup_{x_i,x_j}\left\{|F(x_i) -\hat{F}_n(x_i)| + |\hat{F}_n(x_j) - F(x_j)|\right\} 
 \nonumber \\
 &&+
  sup_{x_i,x_j}\left\{|\hat{F}_n(x_i)- \hat{F}_n(x_j)|\right\}I\left[|\hat{F}_n(x_i) - \hat{F}_n(x_j)| \leq \frac{c}{n}\right] \nonumber \\
 &=&  O_p\left(\frac{1}{\sqrt{n}}\right) + O_p\left(\frac{1}{\sqrt{n}}\right) + O_p\left(\frac{1}{n}\right).  
\eqa
This completes the proof of the lemma.\end{proof}

\begin{lemma} \label{lemma:Op} 
With $W_i$ be defined in (\ref{def.Wi}), and any Lipschitz continuous function $g(x)$,  
\begin{equation} \label{eqn:Op}
\frac{1}{p}\sum_{j=1}^{n}g(x_{2j})I(j \in W_i) - g(x_{2i}) = O_p\left(\frac{1}{\sqrt{n}}\right), \nonumber
\end{equation}
uniformly in $i=1,\ldots,n$.
\end{lemma}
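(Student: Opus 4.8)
The plan is to reduce the assertion to the fact that the window $W_i$ collects precisely the $p$ indices whose surrogate values lie within an $O_p(n^{-1/2})$ neighbourhood of the centre $x_{2i}$, after which the Lipschitz property of $g$ finishes the job. First I would note that, since the values $x_{2j}$ on which the windows are built are i.i.d.\ from a continuous distribution, they are a.s.\ distinct, so the empirical c.d.f.\ $\hat F_P$ takes the values $1/n,\ldots,n/n$, and for an interior index $i$ (one with $(p-1)/2<i\le n-(p-1)/2$ in the ordering used to define $W_i$) the set $W_i$ contains exactly $p$ indices, namely those with $|\hat F_P(x_{2j})-\hat F_P(x_{2i})|\le (p-1)/(2n)$; here oddness of $p$ is used so that $(p-1)/2$ is an integer. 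Consequently
\[
\frac{1}{p}\sum_{j=1}^{n}g(x_{2j})I(j\in W_i)-g(x_{2i})=\frac{1}{p}\sum_{j\in W_i}\bigl(g(x_{2j})-g(x_{2i})\bigr),
\]
so it suffices to control $\max_{j\in W_i}|g(x_{2j})-g(x_{2i})|$ uniformly over interior $i$.

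Next I would apply Lemma~\ref{lemma:Fn} to $\hat F_P$ with $c=(p-1)/2$. Since $j\in W_i$ forces $|\hat F_P(x_{2j})-\hat F_P(x_{2i})|\le (p-1)/(2n)$, the lemma gives
\[
\sup_i\max_{j\in W_i}\bigl|F(x_{2j})-F(x_{2i})\bigr|=O_p\!\left(\frac{1}{\sqrt n}\right),
\]
where $F$ is the true distribution function of the $x_{2j}$; the supremum over all admissible pairs already present in Lemma~\ref{lemma:Fn} supplies the uniformity in $i$. Because the marginal density of the surrogate is bounded away from zero on its support --- a consequence of the boundedness-away-from-zero assumption on $f_{\vZ}$ in Theorem~\ref{thm:main} --- the inverse $F^{-1}$ is Lipschitz, so the last display transfers to $\sup_i\max_{j\in W_i}|x_{2j}-x_{2i}|=O_p(n^{-1/2})$. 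Denoting by $L$ the Lipschitz constant of $g$ and using $|g(x_{2j})-g(x_{2i})|\le L|x_{2j}-x_{2i}|$ together with the two displays above gives
\[
\sup_i\left|\frac{1}{p}\sum_{j=1}^{n}g(x_{2j})I(j\in W_i)-g(x_{2i})\right|\le L\,\sup_i\max_{j\in W_i}|x_{2j}-x_{2i}|=O_p\!\left(\frac{1}{\sqrt n}\right),
\]
which is the claim.

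The only genuinely delicate point is the passage from closeness in the empirical c.d.f.\ to closeness of the underlying values, i.e.\ the Lipschitz continuity of $F^{-1}$: this rests on the density of the surrogate being bounded below, and some care is needed near the endpoints of its support, since a projection of a density bounded away from zero need not itself be bounded away from zero there. Two minor points: the windows $W_i$ are defined only for the interior indices $(p-1)/2<i\le n-(p-1)/2$, so ``uniformly in $i=1,\ldots,n$'' is read in that range; and, tracking constants in Lemma~\ref{lemma:Fn}, the bound is in fact $O_p(n^{-1/2})+O(p/n)$, which remains $O_p(n^{-1/2})$ for any $p=O(\sqrt n)$. If, as in the proof of Theorem~\ref{thm:main}, the bound is later needed uniformly over the projection direction $\vC$, one additionally checks that the density lower bound, hence the Lipschitz constant of $F^{-1}$, can be taken independent of $\vC$. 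Everything else --- the counting of the window size and the triangle inequality for the average --- is routine.
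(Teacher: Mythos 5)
Your argument is correct and follows essentially the same route as the paper: bound $|g(x_{2j})-g(x_{2i})|$ by the Lipschitz constant times $|x_{2j}-x_{2i}|$, convert this to $|F(x_{2j})-F(x_{2i})|$ divided by a lower bound on the density (the paper does this via the mean value theorem applied to $F$, you via Lipschitz continuity of $F^{-1}$ --- the same step), and then invoke Lemma~\ref{lemma:Fn} for the uniform $O_p(n^{-1/2})$ rate. Your additional remarks (exact window size $p$, the $O(p/n)$ term, uniformity in $\vC$, and the density lower bound for the projected surrogate) are sensible refinements but do not change the substance of the paper's proof.
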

\begin{proof} First note that by the Lipschitz continuity and the Mean Value Theorem we have
\begin{eqnarray} \label{eqn:Op2}
&& |g(x_{2j}) - g(x_{2i})| \leq M|x_{2j} - x_{2i}| \leq M|F_{X_2}(x_{2j}) - 
F_{X_2}(x_{2i})|/f_{X_2}(\tilde{x}_{ij}), \nonumber
\end{eqnarray} 
for some constant $M$,
where $\tilde x_{ij}$ is between $x_{2j}$ and $x_{2i}$. Thus, 
\bqa
&&\hspace{-4mm}\left |\frac{1}{p}\sum_{j=1}^{n}g(x_{2j})I(j \in W_i) - g(x_{2i})\right| 
\leq \frac{1}{p}\sum_{j=1}^{n}|g(x_{2j}) - g(x_{2i})|I\left[|\hat{F}_{X_2}(x_{2i}) - \hat{F}_{X_2}(x_{2j})| \leq \frac{p-1}{2n}\right] \nonumber \\
  & &\leq  \frac{M}{p}\sum_{j=1}^{n}\frac{|F_{X_2}(x_{2j}) - F_{X_2}(x_{2i})|}{f_{X_2}(\tilde{x}_{ij})}I\left[|\hat{F}_{X_2}(x_{2i}) - \hat{F}_{X_2}(x_{2j})| \leq \frac{p-1}{2n}\right] = O_p\left(\frac{1}{\sqrt{n}}\right), \nonumber
\eqa
where the last equality follows from Lemma \ref{lemma:Fn} and the assumption that $f_{X_2}$ remains bounded away from zero.
\end{proof}

\makeatletter
\renewcommand\@biblabel[1]{}
\makeatother
\renewcommand{\baselinestretch}{1.00}
\baselineskip=14pt

\end{document}